\newif\ifFull
\newcommand{\myref}[1]{{\color{blue}\rmfamily\ref{#1}}}
\newcommand{\IF}{\textbf{if}~}
\newcommand{\THEN}{~\textbf{then}~}
\newcommand{\ELSE}{~\textbf{else}~}
\newcommand{\ELIF}{\textbf{else if}~}
\newcommand{\convexpath}[2]{
[   
    create hullnodes/.code={
        \global\edef\namelist{#1}
        \foreach [count=\counter] \nodename in \namelist {
            \global\edef\numberofnodes{\counter}
            \node at (\nodename) [draw=none,name=hullnode\counter] {};
        }
        \node at (hullnode\numberofnodes) [name=hullnode0,draw=none] {};
        \pgfmathtruncatemacro\lastnumber{\numberofnodes+1}
        \node at (hullnode1) [name=hullnode\lastnumber,draw=none] {};
    },
    create hullnodes
]
($(hullnode1)!#2!-90:(hullnode0)$)
\foreach [
    evaluate=\currentnode as \previousnode using \currentnode-1,
    evaluate=\currentnode as \nextnode using \currentnode+1
    ] \currentnode in {1,...,\numberofnodes} {
  let
    \p1 = ($(hullnode\currentnode)!#2!-90:(hullnode\previousnode)$),
    \p2 = ($(hullnode\currentnode)!#2!90:(hullnode\nextnode)$),
    \p3 = ($(\p1) - (hullnode\currentnode)$),
    \n1 = {atan2(\y3,\x3)},
    \p4 = ($(\p2) - (hullnode\currentnode)$),
    \n2 = {atan2(\y4,\x4)},
    \n{delta} = {-Mod(\n1-\n2,360)}
  in 
    {-- (\p1) arc[start angle=\n1, delta angle=\n{delta}, radius=#2] -- (\p2)}
}
-- cycle
}
\renewcommand{\emph}[1]{\textit{\textbf{#1}}}
\renewcommand{\subsection}[1]{\paragraph{\textbf{#1.}}}
\DeclareMathOperator{\rank}{rank}
\let\epsilon\varepsilon
\begin{document}
\title{Zip-zip Trees: Making Zip Trees More Balanced, Biased, Compact,
or Persistent\thanks{%
  Research at Princeton Univ. was partially supported by 
  a gift from Microsoft.
  Research at Univ. of California, Irvine was supported
  by NSF Grant 2212129.}}
\author{
Ofek Gila\inst{1}\textsuperscript{*}
\orcidID{0009-0005-5931-771X}
\and
Michael T. Goodrich\inst{1}\textsuperscript{*}
\orcidID{0000-0002-8943-191X}
\and
Robert E. Tarjan\inst{2}\textsuperscript{*}
\orcidID{0000-0001-7505-5768}
}

\authorrunning{O. Gila et al.}

\institute{University of California, Irvine CA 92697, USA \\
\email{\{ogila, goodrich\}@uci.edu}
\and Princeton University, Princeton NJ 08544, USA \\
\email{ret@cs.princeton.edu}}




\maketitle

\begin{abstract}
We define simple variants of zip trees, called \emph{zip-zip trees},
which provide several advantages over zip trees, including overcoming a
bias that favors smaller keys over larger ones.
We analyze zip-zip trees theoretically and empirically, showing, e.g.,
that the expected depth of a node in an $n$-node zip-zip tree is 
at most $1.3863\log n-1+o(1)$,
which matches the expected depth of
treaps and binary search trees built by uniformly random insertions. 
Unlike these other data structures, however,
zip-zip trees achieve their bounds
using only $O(\log\log n)$ bits of metadata per node, w.h.p.,
as compared to the $\Theta(\log n)$ bits per node required by treaps.
In addition, we describe a ``just-in-time'' zip-zip tree variant, which needs
just an expected $O(1)$ number of bits of metadata per node.
Moreover, we can define zip-zip trees to
be strongly history independent, whereas treaps 
are generally only weakly history independent.
We also introduce \emph{biased zip-zip trees}, which have an explicit bias
based on key weights, so the
expected depth of a key, $k$, with weight, $w_k$, is $O(\log (W/w_k))$, where
$W$ is the weight of all keys in the weighted zip-zip tree.
Finally, we show that one can easily make zip-zip trees partially persistent
with only $O(n)$ space overhead w.h.p.
\end{abstract}

\pagestyle{empty}

\section{Introduction}
A \emph{zip tree} is a type of randomized binary search tree introduced by Tarjan,
Levy, and Timmel~\cite{zip}.
Each node contains a specified key and a small randomly generated \emph{rank}.
Nodes are in symmetric order by key, smaller to larger, 
and in max-heap order by rank.
At a high level, zip trees are similar to 
other random search structures, such as
the \emph{treap} data structure
of Seidel and Aragon~\cite{treaps}, the \emph{skip list} data structure
of Pugh~\cite{skip-lists}, 
and the \emph{randomized binary search tree} (RBST) data structure of 
Mart\'{\i}nez and Roura~\cite{rbst},
but with two advantages:
\begin{enumerate}
\item
Insertions and deletions in zip trees are described in terms of simple ``zip''
and ``unzip'' operations rather than sequences of 
rotations as in treaps and RBSTs, which are arguably more complicated; and 
\item
Like treaps, zip trees organize keys using random ranks, but the ranks used
by zip trees use $\Theta(\log \log n)$ bits each, whereas the 
key labels used by treaps and RBSTs use $\Theta(\log n)$ bits each.
Also, as we review and expand upon, zip trees are topologically isomorphic to
skip lists, but use less space.
\end{enumerate}

In addition,
zip trees have a desirable privacy-preservation property with respect
to their
\emph{history independence}~\cite{hartline}.
A data structure is \emph{weakly history independent}
if, for any two sequences of operations $X$ and $Y$ that take
the data structure from initialization to state $A$, 
the distribution over memory after $X$ is
performed is identical to the distribution after $Y$.
Thus, if an adversary observes the final state of the data structure,
the adversary cannot determine the sequence of operations that led to 
that state.
A data structure 
is \emph{strongly history independent}, on the other hand,
if, for any two (possibly empty) sequences of operations
$X$ and $Y$ that take a data structure in state $A$ to state $B$, the distribution over representations
of $B$ after $X$ is performed on a representation, $r$, is identical to the distribution after $Y$ is
performed on $r$.
Thus, if an adversary observes the states of the data structure
at different times,
the adversary cannot determine the sequence of operations that led to 
the second state beyond just what can be inferred from the states themselves.
For example, it is easy to show that skip lists and zip trees are 
strongly history independent, and that treaps and RBSTs are
weakly history independent.\footnote{If the random priorities used in a treap
  are distinct and unchanging for all keys and all time (which occurs
  only probabilistically), then the treap is strongly history independent.}

Indeed, zip trees and skip lists are strongly history independent for
exactly the same reason, 
since Tarjan, Levy, and Timmel~\cite{zip} define zip trees
using a tie-breaking rule for ranks
that makes zip trees isomorphic to skip lists, so that, for instance,
a search in a zip tree would encounter the same keys as would be encountered
in a search in an isomorphic skip list.
This isomorphism between zip trees and skip lists has
a potentially undesirable property,
however,
in that there is an inherent bias in a zip tree that favors smaller
keys over larger keys.
For example, 
as we discuss,
the analysis from Tarjan, Levy, and Timmel~\cite{zip}
implies that
the expected depth of the smallest key in an (original) 
zip tree is $0.5\log n$ whereas the
expected depth of the largest key is $\log n$.  
Moreover, this same analysis implies
that the expected depth for 
any node in a zip tree is at most $1.5\log n+O(1)$, whereas 
Seidel and Aragon~\cite{treaps} show that the expected depth of any node
in a treap is at most $1.3863\log n+1$, and 
Mart\'{\i}nez and Roura~\cite{rbst} prove a similar result
for RBSTs.

As mentioned above,
the inventors of zip trees chose their tie-breaking rule to 
provide an isomorphism between zip trees and skip lists.  
But one may ask if there is a (hopefully simple) 
modification to the tie-breaking rule for zip trees that makes 
them more balanced
for all keys, ideally while still maintaining the property that 
they are strongly history independent and that the metadata
for keys in a zip tree requires only $O(\log\log n)$ bits per key w.h.p.

Note that the structure of zip trees is identical in structure to the \emph{skip list tree}, independently discovered by Erickson two years prior \cite{erickson_treaps_2017}. Skip list trees perform insertions and deletions using rotations rather than through the zip and unzip operations of zip trees.

In this paper, we show how to improve the balance of nodes
in zip trees by a remarkably simple change to its tie-breaking rule
for ranks. 
Specifically, we describe and analyze a zip-tree variant we call
\emph{zip-zip trees},
in which
we give each key a rank pair, $r=(r_1,r_2)$, such that $r_1$ is chosen from
a geometric distribution as in the original definition of zip trees, and $r_2$
is an integer chosen uniformly at random, e.g., in the range $[1,\log^c n]$, 
for $c\ge 3$. We build a zip-zip tree 
just like an original zip tree, but with these rank pairs as its ranks, 
ordered and compared lexicographically.  
We also consider a just-in-time (JIT) variant of zip-zip trees, where we
build the secondary $r_2$ ranks bit by bit as needed to break ties.
Just like an original zip tree, 
zip-zip trees (with static secondary ranks) 
are strongly history independent,
and, in any variant,
each rank in a zip-zip tree requires only $O(\log\log n)$ bits w.h.p.
Nevertheless, as we show (and verify experimentally),
the expected depth of any node in a zip-zip tree storing $n$ keys is 
at most $1.3863\log n-1+o(1)$, whereas the expected depth of a node
in an original zip tree is $1.5\log n+O(1)$, as mentioned above.
We also show (and verify experimentally) 
that the expected depths of the smallest and largest keys
in a zip-zip tree are the same---namely, they both are at most 
$0.6932\log n + \gamma + o(1)$, where $\gamma=0.577721566\ldots$ is the 
Euler-Mascheroni constant.


In addition to showing how to make zip trees more balanced, by
using the zip-zip tree tie-breaking rule,
we also describe how to make them more biased for weighted keys.
Specifically, we study how to store weighted
keys in a zip-zip tree, giving us the following 
variant (which can also be implemented for the original zip-tree
tie-breaking rule):
\begin{itemize}
\item
\emph{biased zip-zip trees}:
These are a biased version of zip-zip trees, 
which support searches with expected performance bounds that 
are logarithmic in $W/w_k$, where $W$ is the total weight of all keys 
in the tree and $w_k$ is the weight of the search key, $k$.
\end{itemize}
Biased zip-zip trees
can be used in simplified versions of the link-cut tree
data structure of Sleator and Tarjan~\cite{link-cut} for dynamically
maintaining arbitrary trees, which has many applications, 
e.g., see Acar~\cite{acar}.

Zip-zip trees and biased zip-zip trees have
only $O(\log\log n)$ bits of metadata per key w.h.p. 
(assuming polynomial weights
in the weighted case) and are strongly history independent 
.
The just-in-time (JIT) variant utilizes only $O(1)$ bits of metadata per operation w.h.p.
but lacks history independence.
Moreover, if zip-zip trees are implemented using the tiny pointers 
technique of Bender, Conway, Farach-Colton, Kuszmaul, 
and Tagliavini~\cite{tiny-pointers}, then all of the non-key data
used to implement such a tree requires just 
$O(n\log\log n)$ bits overall w.h.p.

\subsection{Additional Prior Work}
Before we provide our results, let us briefly review some additional
related prior work.
Although this analysis doesn't apply to treaps or RBSTs,
Devroye~\cite{devroye1986note,devroye1987branching} showed 
that the expected height of a
randomly-constructed binary search tree tends to $4.311\log n$ in
the limit, which tightened a similar earlier result
of Flajolet and Odlyzko~\cite{flajolet1982average}.
Reed~\cite{reed2003height} tightened this bound even further, showing
that the variance of the height of a randomly-constructed binary search tree is $O(1)$.
Eberl, Haslbeck, and Nipkow~\cite{eberl2018verified} showed that this
analysis also applies to treaps and RBSTs, with respect to their expected
height.
Papadakis, Munro, and Poblete~\cite{papadakis1992average} provided
an analysis for the expected search cost in a skip list, showing
the expected cost is roughly $2\log n$.

With respect to weighted keys, 
Bent, Sleator, and Tarjan~\cite{bent1985biased} introduced 
a \emph{biased search tree} data structure,
for storing a set, $\mathcal{K}$, of $n$ weighted keys, with
a search time of $O(\log (W/w_k))$, where $w_k$ is the weight
of the search key,~$k$,
and $W=\sum_{k\in \mathcal{K}} w_k$.
Their data structure is not history independent, however.
Seidel and Aragon~\cite{treaps} provided a weighted version of treaps,
which are weakly history independent and have expected
$O(\log (W/w_k))$ access times, but weighted treaps 
have weight-dependent key labels that use exponentially more bits than are 
needed for weighted zip-zip trees.
Afek, Kaplan, Korenfeld, Morrison, and Tarjan~\cite{afek_cb_2014} provided a fast
concurrent  self-adjusting biased search tree when the weights are access 
frequencies. Zip trees and by extension zip-zip trees would similarly work well in a
concurrent setting, since most updates affect only the bottom of the tree, and updates can be done purely top down, although such
an implementation is not explored in this paper.
Bagchi, Buchsbaum, and Goodrich~\cite{bagchi2005biased} introduced randomized 
\emph{biased skip lists}, which are strongly history independent
and in which the expected time to access a key, $k$,
is likewise $O(\log (W/w_k))$.
Our weighted zip-zip trees
are analogous to biased skip lists, but use less space.

\section{A Review of Zip Trees}
In this section, we review the (original) zip tree data structure
of Tarjan, Levy, and Timmel~\cite{zip}.

\subsection{A Brief Review of Skip Lists}
We begin by reviewing a related structure, namely, the
\emph{skip list} structure of Pugh~\cite{skip-lists}.
Let $\log{n}$ denote the base-two logarithm.
A skip list is a hierarchical, linked collection of sorted 
lists that is constructed using randomization.
All keys are stored in level 0, and, for each key, $k$, 
in level $i\ge0$, we include $k$ in the list in level $i+1$
if a random coin flip (i.e., a random bit) 
is ``heads'' (i.e., 1), which occurs with probability $1/2$ 
and is independent of all other coin flips.
Thus, we expect half of the keys on level $i$ to also appear in level
$i+1$.
In addition, every level includes a node that stores a key,
$-\infty$, that is less than every other key, and
a node that stores a key,
$+\infty$, that is greater than every other key.
The highest level of a skip list is the smallest $i$ such that the list at
level $i$ only stores $-\infty$ and $+\infty$.
(See \Cref{fig:skip}.)
The following theorem follows from well-known properties of skip lists.

\begin{theorem}
\label{thm:skip-max}
Let $S$ be a skip list built from $n$ distinct keys. 
The probability that 
the height of $S$ is more than $\log n + f(n)$ is
at most $2^{-f(n)}$, 
for any monotonically increasing function $f(n)>0$.
\end{theorem}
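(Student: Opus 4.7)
The plan is to prove the tail bound on the height via a straightforward union bound over the $n$ keys, using the independence of the coin flips that drive promotions between levels.

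First I would unpack what ``height more than $h$'' means in terms of the construction. By the stated definition, the highest level of $S$ is the smallest index $i$ such that level $i$ contains only the sentinels $-\infty$ and $+\infty$. Hence the event that the height of $S$ exceeds $h$ is exactly the event that some (non-sentinel) key still appears on level $h$, since otherwise $h$ itself would already satisfy the ``smallest such $i$'' condition.

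Next I would compute, for a single key $k$, the probability that $k$ appears on level $h$. By construction, $k$ is promoted from level $j$ to level $j+1$ independently with probability $1/2$, so $k$ reaches level $h$ if and only if it wins $h$ consecutive independent coin flips. This occurs with probability $2^{-h}$. A union bound over the $n$ keys then gives
\[
\Pr[\text{height of } S > h] \;\le\; n \cdot 2^{-h}.
\]
Substituting $h = \log n + f(n)$ yields $n \cdot 2^{-\log n - f(n)} = 2^{-f(n)}$, which is the bound claimed.

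There is no real obstacle here: the argument is a one-line union bound, and the monotonicity of $f$ is not actually needed for the inequality itself (it is typically invoked only when one wants to turn the tail bound into a high-probability statement about the height being $O(\log n)$). The only thing to be careful about is the off-by-one convention in the definition of ``highest level'', which I would state explicitly at the start of the proof so that the translation between ``height $> h$'' and ``some key survives $h$ promotions'' is unambiguous.
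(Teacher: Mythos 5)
Your proof is correct and is essentially the same as the paper's: both bound the survival probability of a single key to level $h$ by $2^{-h}$ (a geometric tail) and apply a union bound over the $n$ keys, then substitute $h = \log n + f(n)$. Your remark that the monotonicity of $f$ is not actually used in the inequality is also accurate.
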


\setlength{\textfloatsep}{10pt plus 1.0pt minus 2.0pt}

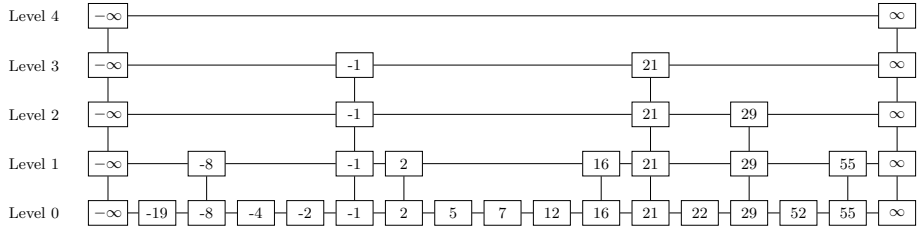
\begin{figure}[t!]
    \centering
    \setlength{\belowcaptionskip}{0pt}
    \resizebox{\linewidth}{!}{\begin{tikzpicture}[n/.style = {rectangle, draw, minimum width = 0.75cm, minimum height=0.5cm}]

    \def\maxheight{4}

    \def\skiplist{
        {-100/\maxheight},
        {-19/0},
        {-8/1},
        {-4/0},
        {-2/0},
        {-1/3},
        {2/1},
        {5/0},
        {7/0},
        {12/0},
        {16/1},
        {21/3},
        {22/0},
        {29/2},
        {52/0},
        {55/1},
        {100/\maxheight}%
    }

    \foreach \y in {0,...,\maxheight}{
        \node (\y) at (-0.5, \y) {Level \y};
    }
    
    \foreach \key\height [count=\i] in \skiplist{
        \foreach \y in {0,...,\height}{
            \node[n] (\i\y) at (\i, \y) {
                \ifthenelse{\key = -100}{$-\infty$}{\ifthenelse{\key = 100}{$\infty$}{\key}}
            };
        }
    }

    \foreach \y in {0,...,\maxheight}{
        \pgfmathtruncatemacro\lasti{1}
        \foreach \key\height [count=\i] in \skiplist{
            \ifthenelse{\i = 1 \OR \height < \y}{}{
                \draw[-] (\lasti\y) -- (\i\y);
                \global\let\lasti=\i;
            }

            \ifthenelse{\y = 0 \OR \height < \y}{}{
                \pgfmathtruncatemacro\lowery{\y-1};
                \draw[-] (\i\lowery) -- (\i\y);
            }
        }
    }
\end{tikzpicture}}

    \caption{\label{fig:skip} An example skip list.}
\end{figure}

\setlength{\textfloatsep}{10pt}

\begin{proof}
Note that the highest level in $S$ is determined by 
the random variable $X=\max\{X_1,X_2,\ldots,X_n\}$, where each $X_i$
is an independent geometric random variable with success probability $1/2$.
Thus, for any $i=1,2,\ldots, n$, 
\[
\Pr(X_i> \log n + f(n)) < 2^{-(\log n + f(n))} = 2^{-f(n)}/n;
\]
By a union bound, $\Pr(X>\log n+f(n))<2^{-f(n)}$.
%
\qed\end{proof}


\subsection{Zip Trees and Their Isomorphism to Skip Lists}
We next review the definition of the (original) zip 
tree data structure~\cite{zip}.
A zip tree is a binary search tree in which nodes are 
max-heap ordered according to random \emph{ranks}, with 
ties broken in favor of smaller keys, so that the parent of a node has
rank greater than that of its left child and 
no less than that of its right child~\cite{zip}. 
The rank of a node is drawn from a geometric distribution with 
success probability $1/2$, starting from a rank $0$, so 
that a node has rank $k$ with probability $1/2^{k+1}$.

As noted by Tarjan, Levy, and Timmel~\cite{zip},
there is a natural isomorphism between a 
skip-list, $L$,  and a zip tree, $T$, where 
$L$ contains a key $k$ in its level-$i$ list if and only if
$k$ has rank at least $i$ in $T$.
That is, the rank of a key, $k$, in $T$ equals the highest level in $L$ that
contains $k$.
See \Cref{fig:zip-tree}.
\ifFull
Incidentally, this isomorphism is 
topologically identical to a duality between skip lists and binary search
trees observed earlier by
Dean and Jones~\cite{duality},
but the constructions of Dean and Jones are for binary search trees 
that involve rotations to maintain balance
and have different metadata than zip trees, so,
apart from the topological similarities, the analyses of Dean
and Jones don't
apply to zip trees.
\fi

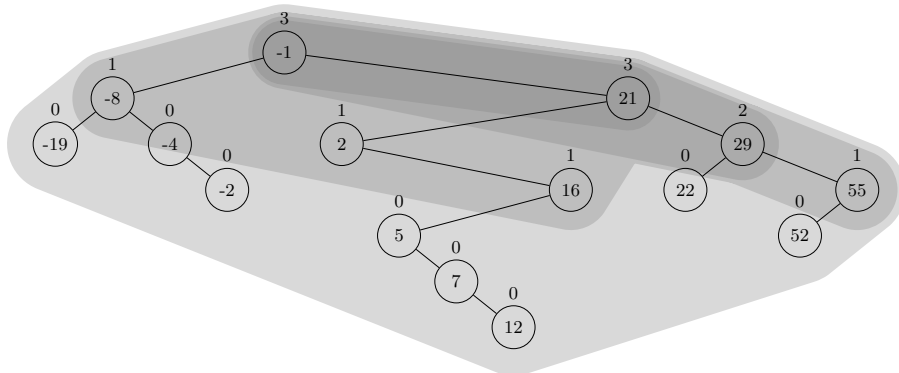
\begin{figure}[b!]
    \centering
    \resizebox{\linewidth}{!}{\begin{tikzpicture}[n/.style = {circle, draw, minimum width = 0.75cm, minimum height=0.5cm}]
    \def\ziptree{
        {-19/0/2},
        {-8/1/1},
        {-4/0/2},
        {-2/0/3},
        {-1/3/0},
        {2/1/2},
        {5/0/4},
        {7/0/5},
        {12/0/6},
        {16/1/3},
        {21/3/1},
        {22/0/3},
        {29/2/2},
        {52/0/4},
        {55/1/3}%
    }

    \foreach \key\rank\height [count=\i] in \ziptree{
        \pgfmathsetmacro\x{\i};
        \pgfmathsetmacro\y{-0.8*\height};
        \node[n,label=\rank] (\key) at (\x, \y) {\key};
    }

    \draw[-] (-1) -- (-8) -- (-19)
             (-8) -- (-4) -- (-2)
             (-1) -- (21)
             (12) -- (7) -- (5) -- (16) -- (2) -- (21) -- (29) -- (55) -- (52)
             (29) -- (22);

    \begin{pgfonlayer}{background}
        \fill[gray,opacity=0.3] \convexpath{-1,21}{16pt};
        \fill[gray,opacity=0.3] \convexpath{-1,21,29}{18pt};
        \fill[gray,opacity=0.3] \convexpath{-1,21,29,55,29,-1,21,16,2,-8}{20pt};
        \fill[gray,opacity=0.3] \convexpath{-19,-8,-1,21,55,52,12}{24pt};
    \end{pgfonlayer}
\end{tikzpicture}}

    \caption{\label{fig:zip-tree} An example zip tree, corresponding
to the skip list in \Cref{fig:skip}. }
\end{figure}


An advantage of a zip tree, $T$, over its isomorphic skip list, $L$,
is that $T$'s space usage
is roughly half of that of $L$, and $T$'s
search times are also better. 
Nevertheless, there is a potential undesirable property of zip trees,
in that an original zip tree is biased towards smaller keys,
as we show in the following.

\begin{theorem}
\label{thm:zip-smallest-largest}
Let $T$ be an (original) zip tree storing $n$ distinct keys.
Then the expected depth of the smallest key is 
$0.5\log n + O(1)$, whereas the expected depth of the largest
key is $\log n + O(1)$.
\end{theorem}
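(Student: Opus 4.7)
The plan is to apply linearity of expectation after characterizing ancestors of $k_1$ and $k_n$, and then exploit an exact $2{:}1$ identity between the individual ancestor probabilities that comes from the tie-breaking rule. Label the keys $k_1<k_2<\cdots<k_n$ with ranks $R_1,\ldots,R_n$. Because the rule makes a tied smaller-keyed node the parent (and the tied larger-keyed node its right child), I would show by induction on the descent along the leftmost, respectively rightmost, root-to-leaf path that $k_i$ is an ancestor of $k_1$ iff $R_i > R_j$ for every $j<i$ (\emph{strict}), while $k_i$ is an ancestor of $k_n$ iff $R_i \ge R_j$ for every $j>i$ (\emph{non-strict}). Intuitively, ties help a key stay on the rightmost path but disqualify it from the leftmost one.

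By linearity of expectation,
\[
E[\operatorname{depth}(k_1)] = \sum_{m=1}^{n-1} p^{>}_m, \qquad E[\operatorname{depth}(k_n)] = \sum_{m=1}^{n-1} p^{\ge}_m,
\]
where $p^{>}_m$ (resp.\ $p^{\ge}_m$) is the probability that one geometric rank strictly (resp.\ weakly) exceeds the maximum of $m$ i.i.d.\ geometric ranks. Conditioning on the value of $R$ and using $\Pr[R=k]=2^{-(k+1)}$ together with $\Pr[R<k]=1-2^{-k}$ and $\Pr[R\le k]=1-2^{-(k+1)}$ rewrites these as
\[
p^{>}_m = \tfrac{1}{2}\sum_{j\ge 1} 2^{-j}(1-2^{-j})^m, \qquad p^{\ge}_m = \sum_{j\ge 1} 2^{-j}(1-2^{-j})^m,
\]
so $p^{>}_m = \tfrac{1}{2}\, p^{\ge}_m$ \emph{exactly}, which already yields $E[\operatorname{depth}(k_1)] = \tfrac{1}{2}\, E[\operatorname{depth}(k_n)]$.

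It then remains to show $E[\operatorname{depth}(k_n)] = \log n + O(1)$. Using the continuous substitution $x = 2^{-j}$, I would approximate $p^{\ge}_m$ by the integral $(1/\ln 2)\int_0^{1/2}(1-x)^m\,dx = (1-2^{-(m+1)})/((m+1)\ln 2) = 1/(m\ln 2) + O(1/m^2)$. Summing over $m=1,\ldots,n-1$ and invoking $H_n = \ln n + \gamma + O(1/n)$ together with $\ln n/\ln 2 = \log n$ gives $E[\operatorname{depth}(k_n)] = \log n + O(1)$, and halving yields $E[\operatorname{depth}(k_1)] = 0.5\log n + O(1)$. The main technical obstacle is controlling the Riemann-sum error $p^{\ge}_m - \int_1^\infty 2^{-t}(1-2^{-t})^m\,dt$ uniformly in $m$; Euler--Maclaurin correction terms at $t=1$ involve $2^{-m-1}$ and derivatives of $2^{-t}(1-2^{-t})^m$ that are $O(\mathrm{poly}(m)\cdot 2^{-m})$, so the per-$m$ error is $O(1/m^2)$ and its sum contributes only $O(1)$.
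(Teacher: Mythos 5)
Your route is genuinely different from the paper's, which simply invokes Lemmas~3.3 and~3.4 of Tarjan, Levy, and Timmel together with the $\log n + O(1)$ bound on the expected maximum rank; yours is a self-contained first-principles computation. The ancestor characterization (strict record on the left spine, weak record on the right spine, driven by the smaller-key tiebreak), the reduction $E[\operatorname{depth}(k_1)]=\sum_m p^{>}_m$ and $E[\operatorname{depth}(k_n)]=\sum_m p^{\ge}_m$, and the exact identity $p^{>}_m=\tfrac12 p^{\ge}_m$ are all correct, and the $2{:}1$ identity is a nice observation.

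The gap is in the final asymptotic step. The claim that $p^{\ge}_m - \int_1^\infty 2^{-t}(1-2^{-t})^m\,dt = O(1/m^2)$ does not follow from Euler--Maclaurin as you argue, because you have only controlled the boundary contributions at $t=1$. The Euler--Maclaurin remainder also contains an integral of a derivative of $f(t)=2^{-t}(1-2^{-t})^m$ over all of $[1,\infty)$, and since $f$ is a bump of height $\Theta(1/m)$ and width $\Theta(1)$ centered near $t\approx\log_2 m$, every such derivative has $L^1$-norm $\Theta(1/m)$ there. The sum--integral discrepancy is therefore $\Theta(1/m)$, not $O(1/m^2)$: this is the familiar periodicity phenomenon for dyadic harmonic sums, and one has $p^{\ge}_m = \bigl(1+Q(\log_2 m)\bigr)/(m\ln 2)+O(1/m^2)$ for a tiny but nonzero mean-zero periodic $Q$ (coming from $\Gamma$ evaluated on the line $\Re(s)=1$). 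Your term-by-term bound is thus false, and bounding the total error by $O(1)$ requires cancellation across $m$ that the Riemann-sum argument does not establish. The cleanest repair avoids per-$m$ asymptotics entirely: swap the order of summation and telescope the geometric series,
\[
\sum_{m=1}^{n-1}p^{\ge}_m \;=\; \sum_{j\ge1}2^{-j}\sum_{m=1}^{n-1}(1-2^{-j})^m \;=\; \sum_{j\ge1}\Bigl(1-(1-2^{-j})^n-2^{-j}\Bigr),
\]
and the right-hand side is the classical $\log_2 n + O(1)$ sum (terms are $1-o(1)$ for $j\lesssim\log_2 n$ and $O(n2^{-j})$ thereafter, and $\sum_j 2^{-j}=1$). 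Halving via your $2{:}1$ identity then yields $E[\operatorname{depth}(k_1)]=0.5\log n+O(1)$, completing the argument.
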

\begin{proof}
The bound for the largest (respectively smallest) key follows immediately
from Lemma 3.3 (respectively Lemma 3.4) 
from Tarjan, Levy, and Timmel~\cite{zip} and the fact
that the expected largest rank in $T$
is at most $\log n+O(1)$.
\qed\end{proof}

That is, the expected depth of the largest key in an original zip tree
is twice that of the smallest key.
This bias also carries over into
the bound of Tarjan, Levy, and Timmel~\cite{zip} on
the expected depth of a node in an original zip tree,
which they show is at most
$1.5\log n + O(1)$.
In contrast, the expected depth of a node in a 
treap or randomized binary search tree is at most
$1.39\log n + O(1)$~\cite{treaps,rbst}.

\subsection{Insertion and Deletion in Zip Trees and Zip-zip Trees}

Insertion and deletion in a zip tree
are done by simple ``unzip'' and ``zip'' operations. These algorithms
also work for the variants we discuss in this paper, with the only 
difference being in the way we define ranks.

To insert a new node $x$ into a zip tree, we search for $x$ in the tree 
until reaching the node $y$
that $x$ will replace, namely the node $y$ such that 
$y.rank \le x.rank$, with strict inequality if $y.key < x.key$. 
From $y$, we follow the rest of the search path for $x$, emph{unzipping} it by 
splitting it into a path, $P$,
containing each node with key less than $x.key$ and a 
path, $Q$, containing each node with key greater
than $x.key$ (recall that we assume keys are distinct)~\cite{zip}.  The top node on $P$ (respectively $Q$) becomes the left (respectively right) child of the node to be inserted, which itself replaces $y$ as a child of its parent.  
To delete a node $x$, we perform the inverse operation: 
We do a search to find $x$, and let $P$ and $Q$ be
the right spine of the left subtree of $x$ and the left spine 
of the right subtree of $x$, respectively. Then we zip $P$ and $Q$ together to
form a single path $R$, by merging them from top to bottom in 
non-increasing rank order, breaking
a tie in favor of the smaller key~\cite{zip}.  The top node of $R$ replaces $x$ as a child of its parent. 
See \Cref{fig:zip}.
Pseudo-code is provided in \Cref{sec:pseudo}.

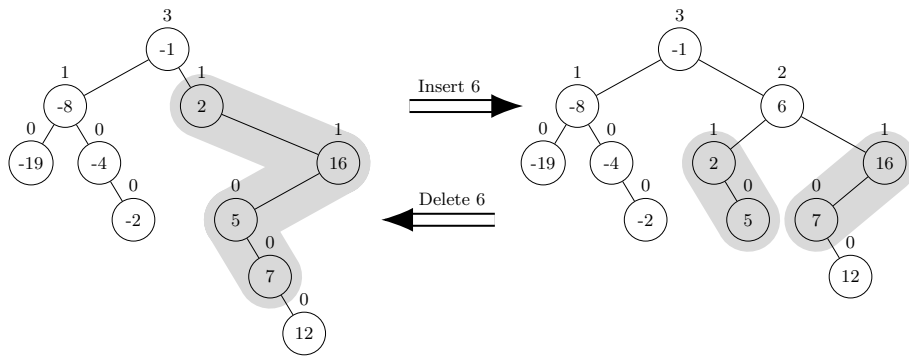
\begin{figure}[hbt]
    \centering
    \resizebox{\linewidth}{!}{\begin{tikzpicture}[n/.style = {circle, draw, minimum width = 0.75cm, minimum height=0.5cm},a/.style = {line width=1pt, double distance=5pt,
             arrows = {-Latex[length=0pt 4 .5]}}]
    \def\ziptreeone{
        {-19/0/2},
        {-8/1/1},
        {-4/0/2},
        {-2/0/3},
        {-1/3/0},
        {2/1/1},
        {5/0/3},
        {7/0/4},
        {12/0/5},
        {16/1/2}%
    }

    \foreach \key\rank\height [count=\i] in \ziptreeone{
        \pgfmathsetmacro\x{0.6*\i};
        \node[n,label=\rank] (\key) at (\x, -\height) {\key};
    }

    \draw[-] (-1) -- (-8) -- (-19)
             (-8) -- (-4) -- (-2)
             (12) -- (7) -- (5) -- (16) -- (2) -- (-1);

    \begin{pgfonlayer}{background}
        \fill[gray,opacity=0.3] \convexpath{2,16,5,7,5,16}{16pt};
    \end{pgfonlayer}

    \draw [a] (7.25,-1) -- node[label={[xshift=-0.3cm]Insert 6}] {} (9.25,-1);
    \draw [a] (8.75,-3) -- node[label={[xshift=0.25cm]Delete 6}] {} (6.75,-3);

    \def\ziptreetwo{
        {-19/0/2},
        {-8/1/1},
        {-4/0/2},
        {-2/0/3},
        {-1/3/0},
        {2/1/2},
        {5/0/3},
        {6/2/1},
        {7/0/3},
        {12/0/4},
        {16/1/2}%
    }

    \foreach \key\rank\height [count=\i] in \ziptreetwo{
        \pgfmathsetmacro\x{0.6*\i+9};
        \node[n,label=\rank] (\key) at (\x, -\height) {\key};
    }

    \draw[-] (6) -- (-1) -- (-8) -- (-19)
             (-8) -- (-4) -- (-2)
             (5) -- (2) -- (6) -- (16) -- (7) -- (12);

    \begin{pgfonlayer}{background}
        \fill[gray,opacity=0.3] \convexpath{2,5}{16pt};
        \fill[gray,opacity=0.3] \convexpath{16,7}{16pt};
    \end{pgfonlayer}
\end{tikzpicture}}
    
    \caption{\label{fig:zip} How insertion in a zip tree is done via unzipping and  deletion is done via zipping. }
\end{figure}

\section{Zip-zip Trees}
In this section, we define and analyze the zip-zip tree data structure.

\subsection{Uniform Zip Trees}
As a warm-up, let us first 
define a variant of the original zip tree, called
the \emph{uniform zip tree}.  This is a zip tree in which the rank of 
each key is a random integer drawn independently from a uniform
distribution over a suitable range.
We perform insertions and deletions in a uniform zip tree exactly as in 
an original zip tree, except that rank comparisons are done using
these uniform ranks rather than using ranks drawn from 
a geometric distribution.
If there are no rank ties that occur during its construction, a uniform zip tree is a treap~\cite{treaps}.
But if a rank tie occurs,
we resolve it using the tie-breaking rule for a
zip tree, rather than doing a complete tree rebuild,
as is done for a treap~\cite{treaps}.
We introduce uniform zip trees only as a stepping stone to our
definition of zip-zip trees, which we give next.

\subsection{Zip-zip Trees}
A \emph{zip-zip tree} is a zip tree in which we define the rank of 
each key to be a pair, $r=(r_1,r_2)$, where $r_1$ is drawn independently 
from a geometric distribution with success probability $1/2$ 
(as in original zip trees) and $r_2$ is an integer drawn independently from
a uniform distribution on the interval $[1,\log^c n]$, for $c\ge3$.
We perform insertions and deletions in a zip-zip tree exactly as in 
an original zip tree, except that rank comparisons are done lexicographically
based on the $(r_1,r_2)$ pairs.
That is, we perform an update operation focused primarily on the $r_1$ ranks,
as in an original zip tree, but we break ties by reverting to 
$r_2$ ranks. And if we still get a rank
tie for two pairs of ranks, then we break these ties as in original zip trees, biasing in favor of smaller keys. 
As we shall show, such ties occur with such low probability that they
don't significantly impact the expected depth of any node
in a zip-zip tree. 
This also implies that 
the expected depth of the smallest key in a zip-zip tree is the
same as for the largest key.

Let $x_i$ be a node in a zip-zip tree, $T$.
Define the \emph{$r_1$-rank group} of $x_i$
as the connected subtree of $T$ containing 
all nodes with the same $r_1$-rank as $x_i$.
That is, each node in $x_i$'s $r_1$-rank group
has a rank tie with $x_i$
when comparing ranks with just the first rank coordinate, $r_1$.

\begin{lemma}
\label{lem:zipzip}
The $r_1$-rank group for any node, $x_i$, in a zip-zip tree
is a uniform zip tree defined using $r_2$-ranks.
\end{lemma}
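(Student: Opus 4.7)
The plan is to break the proof into two parts: first, confirm that the $r_1$-rank group is a well-defined connected subtree whose shape is determined entirely by $r_2$-ranks (and keys); second, argue that within this subtree, the construction is indistinguishable from a standalone uniform zip tree on the same keys with the same $r_2$-ranks.

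For the structural part, I would use the lexicographic max-heap property on the pairs $(r_1,r_2)$. Walking from any node upward to the root, the $r_1$-coordinates are non-decreasing, so the ancestors of $x_i$ that share its $r_1$-value form a contiguous chain starting at $x_i$. Dually, once we descend into a node of strictly smaller $r_1$, the whole subtree below it has strictly smaller $r_1$. Consequently, the set of nodes that share $x_i$'s $r_1$-value and are reachable from $x_i$ by same-$r_1$ edges is exactly a connected subtree of $T$, justifying the definition. Within this subtree, every pair of nodes has the same first rank coordinate, so the lexicographic comparison that governs heap order collapses to a comparison of $r_2$-ranks, with the same smaller-key tie-break rule the zip tree uses on the secondary coordinate.

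For the distributional part, I would rely on the locality of zip/unzip: an insertion or deletion only looks at ranks and keys along the search path and modifies pointers via rank comparisons. When an update touches the $r_1$-rank group, the comparisons that actually matter (those involving two nodes of equal $r_1$) are exactly comparisons of their $r_2$-values, and every structural change inside the group is produced by a zip or unzip step that would be carried out identically if we were building a uniform zip tree on the group's keys using their $r_2$-ranks alone. Since the $r_2$-ranks are drawn independently and uniformly from $[1,\log^c n]$, the resulting subtree has precisely the distribution of a uniform zip tree on those keys.

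The main obstacle is justifying that "the induced subtree looks like a stand-alone uniform zip tree" even though updates in the ambient tree can push nodes into and out of the group (for example, an inserted node with a smaller $r_1$ sits strictly below the group and does not affect it, but a new node with matching $r_1$ really does insert into the group, and likewise for deletions). I would handle this by an inductive argument on the sequence of updates, showing that after each update the group remains a connected subtree and that the local transformation it undergoes matches the zip/unzip operation one would perform on the analogous uniform zip tree built from the group's keys and their $r_2$-ranks, because the external (higher $r_1$) nodes act only as fixed boundary anchors during the search path and play no role once comparisons are restricted to equal $r_1$ values.
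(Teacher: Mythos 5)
Your structural half is exactly what the paper means when it says the proof ``follows immediately from the definitions'': inside an $r_1$-rank group all first coordinates agree, so the lexicographic max-heap condition on $(r_1,r_2)$ collapses to a max-heap condition on $r_2$ with the same smaller-key tie-break, and the group is a connected subtree, hence a binary search tree on its keys. Those are precisely the defining properties of a uniform zip tree, so the lemma is established at that point. The ``obstacle'' you then raise, and the inductive argument over the update sequence you build to address it, is unnecessary machinery. A zip tree (and hence a zip-zip tree) is the unique binary search tree on its key set that is max-heap ordered by rank with the stated tie-break rule; its shape is a deterministic function of the set of (key, rank) pairs present, independent of the order or manner in which they were inserted. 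This is exactly the strong history independence the paper emphasizes. Because the lemma is a claim about that static shape, one never needs to track how zip or unzip operations move nodes into or out of the group --- the final configuration is determined outright by which keys have which ranks. Your dynamic argument is not wrong, but it re-derives by induction something the static uniqueness already gives for free, and in doing so it makes the lemma look harder than it is.
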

\begin{proof}
The proof follows immediately from the definitions.
\qed\end{proof}

Incidentally,
\Cref{lem:zipzip} is the motivation for the name ``zip-zip tree,''
since a zip-zip tree can be viewed as a zip tree comprised of little zip trees.
Moreover, this lemma immediately implies that a zip-zip tree is strongly
history independent, since both zip trees and uniform zip trees are strongly history independent.

See \Cref{fig:zipzip}.

\begin{figure}[t]
    \centering
    \resizebox{\linewidth}{!}{\begin{tikzpicture}[n/.style = {circle, draw, minimum width = 0.75cm, minimum height=0.5cm}]
    \def\ziptree{
        {-19/0/33/3},
        {-8/1/26/2},
        {-4/0/31/3},
        {-2/0/1/4},
        {-1/3/13/1},
        {2/1/1/3},
        {5/0/23/5},
        {7/0/46/4},
        {12/0/13/5},
        {16/1/49/2},
        {21/3/31/0},
        {22/0/21/2},
        {29/2/20/1},
        {52/0/2/3},
        {55/1/38/2}%
    }

    \foreach \key\rankg\ranku\height [count=\i] in \ziptree{
        \pgfmathsetmacro\x{\i};
        \pgfmathsetmacro\y{-0.8*\height};
        \node[n,label={(\rankg,\ranku)}] (\key) at (\x, \y) {\key};
    }

    \draw[-] (-19) -- (-8) -- (-4) -- (-2)
             (-8) -- (-1) -- (21) -- (29) -- (55) -- (52)
             (5) -- (7) -- (12) (7) -- (2) -- (16) -- (-1)
             (22) -- (29);

    \begin{pgfonlayer}{background}
        \fill[gray,opacity=0.3] \convexpath{5,7,12}{28pt};
        \fill[gray,opacity=0.3] \convexpath{-4,-2}{28pt};
        \fill[gray,opacity=0.3] \convexpath{2,16}{28pt};
        \fill[gray,opacity=0.3] \convexpath{-1,21}{28pt};
    \end{pgfonlayer}
\end{tikzpicture}}
    \caption{\label{fig:zipzip} A zip-zip tree, with
     each node labeled with its $(r_1,r_2)$ rank. Each shaded
     subtree is an $r_1$-rank group defining a uniform zip tree based
     on $r_2$ ranks.}
\end{figure}
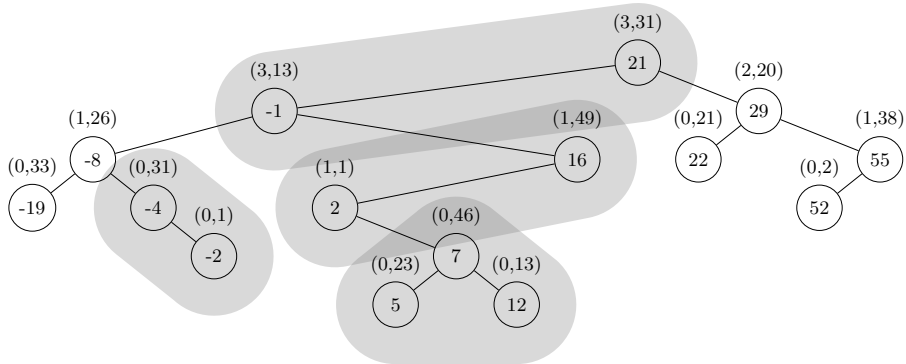

\begin{lemma}
\label{lem:rank-groups}
The number of nodes in an $r_1$-rank group in a zip-zip tree, $T$
storing $n$ keys has expected value $2$ and is at most $2\log n$
with high probability.
\end{lemma}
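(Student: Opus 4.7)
The plan is to first establish a structural characterization of $r_1$-rank groups purely in terms of the sorted key order and the $r_1$-ranks (the $r_2$-ranks are irrelevant for the size of a group, only for its internal structure), and then extract both claims from elementary geometric-distribution calculations.

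The characterization I would prove is: for any node $x_i$ with $r_1$-rank $k$, its rank group is exactly the maximal run of rank-$k$ keys containing $x_i$ in the subsequence obtained by projecting the sorted key order onto keys with rank $\geq k$. Intuitively, keys of rank $<k$ sit in left subtrees hanging off rank-$k$ nodes and are ``invisible'' to the rank group, whereas a key of rank $>k$ terminates the run. The forward inclusion uses the max-heap invariant: a rank-$k$ node's left child has rank strictly less than $k$, so a rank group forms a right-descending chain $v_0, v_1, \ldots, v_b$ of rank-$k$ nodes with only lower-ranked left subtrees attached, and hence between consecutive $v_j$'s in sorted order no rank-$\geq k$ key can appear other than the $v_j$'s themselves. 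The reverse inclusion needs an LCA argument: if two rank-$k$ keys have no rank-$>k$ separator between them in sorted order, their LCA has rank $\geq k$ (the max-heap dominates both) and also rank $\leq k$ (its key lies strictly between them and is not a separator), so the LCA has rank exactly $k$; iterating this on sub-paths forces every node on the tree path between them to have rank $k$, placing both keys in the same rank group.

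Once this characterization is in place, both claims follow quickly. Conditional on a key having rank $\geq k$, its rank equals $k$ with probability $(1/2^{k+1})/(1/2^{k}) = 1/2$ and exceeds $k$ with probability $1/2$, independently across keys. So within the rank-$\geq k$ subsequence the rank-$k$ keys form runs whose length is $1+Y$, where $Y$ is geometric with success probability $1/2$ on support $\{0,1,2,\ldots\}$, giving expected length $2$. For the high-probability bound, $\Pr(\text{a specific run has length}\geq t) = (1/2)^{t-1}$, and across all levels $k$ there are at most $n$ possible run-start positions (each key belongs to a unique rank group and therefore starts at most one run), so a union bound gives $\Pr(\exists\text{ rank group of size}\geq 2\log n)\leq n\cdot(1/2)^{2\log n-1} = 2/n$, which is $o(1)$ as required.

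The main obstacle is the reverse direction of the structural characterization --- showing that two same-rank keys with no rank-$>k$ separator between them really sit in the same connected rank-$k$ tree-component. Once the LCA argument above settles that point, the rest of the proof reduces to two short geometric-distribution calculations.
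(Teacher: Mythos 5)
Your overall plan — characterize an $r_1$-rank group as a maximal run of rank-$k$ keys in the subsequence of keys with $r_1\geq k$, then read off the geometric distribution of the run length — is exactly the route the paper takes (the paper states the characterization informally in terms of an in-order traversal from the smallest node of the group), and your probabilistic calculation (conditional probability $1/2$ of staying at level $k$, expected run length $2$, union bound over at most $n$ groups) matches the paper's. Supplying a careful proof of the structural characterization is a reasonable thing to want, since the paper essentially just asserts it.

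However, the specific argument you give for the forward inclusion does not hold for zip-zip trees. You write that ``a rank-$k$ node's left child has rank strictly less than $k$, so a rank group forms a right-descending chain.'' This is true for \emph{original} zip trees, where the heap order is on $r_1$ alone. But in a zip-zip tree the heap order is on the lexicographic pair $(r_1,r_2)$: the left child must have a strictly smaller \emph{pair}, which permits an equal $r_1$ with a smaller $r_2$. Consequently an $r_1$-rank group is in general a full binary subtree, not a right-descending chain — indeed \Cref{lem:zipzip} says each group is a uniform zip tree on the $r_2$-ranks, which certainly contains left edges. Your claim that ``between consecutive $v_j$'s in sorted order no rank-$\geq k$ key can appear other than the $v_j$'s themselves'' is therefore not established by the chain argument. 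A similar issue infects the ``iterating on sub-paths'' step in your reverse inclusion, which also implicitly assumes the chain picture.

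The characterization is still true, and the fix is actually simpler than what you wrote. Use only the fact that $r_1$ is non-increasing from parent to child (ancestors weakly dominate descendants in $r_1$). For the reverse inclusion: if $x<y$ both have $r_1=k$ and no key strictly between them has $r_1>k$, then their LCA $w$ satisfies $k\leq w.r_1$ (ancestor of $x$) and, since $w$'s key lies in $[x,y]$ and is not a separator, $w.r_1\leq k$, so $w.r_1=k$. Now \emph{every} node $u$ on the tree path from $w$ to $x$ (or to $y$) is simultaneously a descendant of $w$ and an ancestor of $x$ (or $y$), hence $k = x.r_1 \leq u.r_1 \leq w.r_1 = k$, so $u.r_1=k$ and the whole path lies in one rank group; no iteration is needed. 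The forward inclusion can be argued similarly: if $z_i<z<z_{i+1}$ with $z_i,z_{i+1}$ consecutive in the group's in-order traversal and $z.r_1\geq k$, locate the LCA $u$ of $z$ and $z_{i+1}$, show $u$ lies on the (all-rank-$k$) tree path between $z_i$ and $z_{i+1}$, and then sandwich every node on the path from $u$ down to $z$ between $u.r_1=k$ and $z.r_1\geq k$ to conclude $z$ is in the group, a contradiction. With this replacement the rest of your proof goes through unchanged.
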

\begin{proof}
Consider the smallest node of a particular $r_1$-rank group and begin an in-order traversal.
Any node with smaller rank appears beneath the group without affecting it, while any node with higher rank stops the traversal.
The $r_1$-rank group is defined as all nodes encountered along the traversal sharing the same rank as $u$.
The set of nodes
in an $r_1$-rank group in $T$ is a sequence of consecutive nodes with rank exactly $r_1$
in an in-order traversal starting from a rank-$r_1$ node, stopping when a node is encountered with greater rank.
Thus the number of nodes, $X$, in an $r_1$-rank group is a random variable
drawn from a geometric distribution with success probability $1/2$; hence,
$E[X]=2$ and $X$ is at most $2\log n$ with probability 
at least $1-1/n^2$.
Moreover, by a union bound, all the $r_1$-rank groups in $T$ have size
at most $2\log n$ with probability
at least $1-1/n$.
\qed
\end{proof}

We can also define a variant of a zip-zip tree
that is not history independent
but that uses only $O(1)$ bits of metadata per key in expectation.

\subsection{Just-in-Time Zip-zip Trees}
In a \emph{just-in-time (JIT) zip-zip tree}, 
we define the $(r_1,r_2)$ rank pair
for a key, $x_i$, so that $r_1$ is (as always) drawn independently 
from a geometric distribution with success probability $1/2$, but where
$r_2$ is an initially empty string of random bits.
If at any time during an update in a JIT zip-zip tree, there is a tie
between two rank pairs, $(r_{1,i},r_{2,i})$ and $(r_{1,j},r_{2,j})$,
for two keys, $x_i$ and $x_j$, respectively, then we independently add
unbiased random bits, one bit at a time, to $r_{2,i}$ and $r_{2,j}$
until $x_i$ and $x_j$ no longer have a tie in their rank pairs, where
$r_2$-rank comparisons are done by viewing the binary strings as binary
fractions after a decimal point.


Note that the definition of an $r_1$-rank group is the same for
JIT zip-zip trees and (standard) zip-zip trees.
Rather than store $r_1$-ranks explicitly, however, we store them as 
a difference between the $r_1$-rank of a node and the $r_1$-rank
of its parent (except for the root).
%
Moreover, by construction, each $r_1$-rank group in a JIT zip-zip tree
is a treap; hence, a JIT zip-zip tree is topologically isomorphic to
a treap.

\begin{theorem}
\label{thm:jit}
Let $T$ be a JIT zip-zip tree resulting from
$n$ update operations starting from an initially empty tree.
The expected number of bits of rank metadata in any non-root node in $T$
is $O(1)$, and the number of bits required for 
all the rank metadata in $T$ is $O(n)$~w.h.p.
\end{theorem}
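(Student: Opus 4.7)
My plan is to split each non-root node's metadata into two pieces, the $r_1$-rank difference to its parent and the $r_2$-bit string, and to bound each separately, first in expectation and then with high probability.

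For the $r_1$-differences, let $d(v) = p(v).r_1 - v.r_1 \ge 0$ for each non-root $v$. I would prove
\[
\sum_{v \neq \mathrm{root}} d(v) \;\le\; 2 \sum_u u.r_1
\]
by a level-threshold double count: writing $\sum_v d(v) = \sum_{k \ge 1} E_k$, where $E_k$ is the number of edges whose parent has $r_1 \ge k$ but whose child has $r_1 < k$, each node $u$ with $u.r_1 \ge k$ contributes at most two edges to $E_k$ (one per child), so $E_k \le 2\cdot\#\{u : u.r_1 \ge k\}$. Since the $r_1$-ranks are i.i.d.\ $\mathrm{Geometric}(1/2)$, the right-hand side has mean $2n$; stored with any prefix-free integer encoding (e.g., Elias-$\gamma$), this yields an $O(1)$ expected bit-count per non-root node, and by a Chernoff bound on a sum of $n$ i.i.d.\ geometric variables, an $O(n)$ total w.h.p.

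For the $r_2$-strings, the crucial observation is that a bit is ever appended to $v.r_2$ only when $v$ is compared against another node whose $(r_1,r_2)$ currently equals $v$'s, which forces equal $r_1$-ranks; hence $v.r_2$ only ever needs to separate $v$ from the members of its $r_1$-rank group. By \Cref{lem:rank-groups} the group size $s(v)$ is geometric with mean $2$ (and at most $2\log n$ w.h.p.). Conditioned on $s$, the length of $v.r_2$ is dominated by the depth of a leaf in the random binary trie built from $s$ i.i.d.\ unbiased bit-streams, which has expectation $O(\log s)$; taking expectation over $s$ gives $O(1)$ expected $r_2$-length per node and $O(n)$ expected total. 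For the w.h.p.\ bound I would condition on every group having size $O(\log n)$ and apply Hoeffding to the independent per-group contributions, each bounded trivially by $O(\log n \cdot \log \log n)$ bits.

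The main obstacle is the $r_2$-side: one must argue carefully that bits appended to $v.r_2$ during historical updates (some of which may have involved now-deleted nodes) can nonetheless be charged to a single, geometrically-distributed group of co-resident ranks, since every collision requiring an $r_2$-extension was necessarily within a common $r_1$-rank group. Once $v.r_2$'s length is reduced to a function of that geometric group size, both the in-expectation statement and the w.h.p.\ statement follow from \Cref{lem:rank-groups} plus standard concentration inequalities.
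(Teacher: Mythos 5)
Your $r_1$ analysis is a valid and rather elegant alternative to the paper's: the paper instead observes that only the roots of $r_1$-rank groups have a nonzero parent-difference, that each such difference is itself geometric with mean at most~$2$, and then applies \Cref{lem:sum} directly to the sum; your level-threshold double count $\sum_v d(v) = \sum_{k\ge1} E_k \le 2\sum_u u.r_1$ reaches the same $O(n)$-sum bound (and per-node $O(1)$ expectation) without needing the rank-group decomposition, at the cost of a constant factor. Either route works for the $r_1$ metadata.

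The $r_2$ side, however, has a real gap, and it is exactly the obstacle you flag but then do not close. You reduce $|v.r_2|$ to the depth of $v$'s leaf in a random trie built from the bit-streams of $v$'s \emph{current} $r_1$-rank group, but the JIT rule appends bits to $v.r_2$ whenever $v$ is compared against \emph{any} node with a tied $(r_1,r_2)$ during \emph{any} of the $n$ updates, including nodes that were subsequently deleted or that, because of intervening inserts and deletes, are no longer in $v$'s group. Since $|v.r_2|$ is monotone in time, the snapshot group at the end gives no upper bound on the historical set of ties that grew it, and your Hoeffding step compounds the problem by treating per-group contributions as independent, which they are not (groups share history through deletions and re-insertions). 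The paper sidesteps the whole issue with an amortized, operation-counting argument: it bounds the total number of rank-tie encounters over all $n$ operations by the sum of the $r_1$-ranks of the updated keys (again a sum of $n$ geometrics, handled by \Cref{lem:sum}), and charges each encounter a geometric number of added bits, yielding $O(n)$ total $r_2$-bits w.h.p.\ and hence $O(1)$ per node in expectation. To repair your proof you would need to replace the ``current group'' trie with an argument about the multiset of all nodes $v$ was ever tied against, which is essentially the paper's encounter count in disguise; as written, the per-node reduction does not go through.
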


To prove this, we use the following lemma:

\begin{lemma}
\label{lem:sum}
Let $X$ be the sum of $n$ independent geometric random variables with success
probability $1/2$. Then, for $t\ge 2$,
\[
\Pr(X>(2+t)n)\le e^{-tn/10} .
\]
\end{lemma}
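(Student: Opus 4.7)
The plan is to reformulate the sum of geometric random variables as a coin-flipping problem and then apply a standard Chernoff lower-tail bound on a binomial.

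Let each $X_i$ count the number of flips of a fair coin needed to see one head (so $X_i \ge 1$ with mean $2$). Then $X = \sum_{i=1}^n X_i$ has the same distribution as the total number of flips of a fair coin needed to accumulate $n$ heads. Consequently, the event $\{X > (2+t)n\}$ is exactly the event that a sequence of $m := (2+t)n$ fair coin flips produces fewer than $n$ heads. So if $H \sim \mathrm{Binomial}(m, 1/2)$ with mean $\mu = m/2 = (2+t)n/2$, then
\[
\Pr(X > (2+t)n) = \Pr(H < n).
\]

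Next I would apply the standard multiplicative Chernoff lower-tail bound $\Pr(H \le (1-\delta)\mu) \le e^{-\delta^2 \mu/2}$, choosing $\delta$ so that $(1-\delta)\mu = n$, i.e., $\delta = 1 - 2/(2+t) = t/(2+t)$. A short computation gives
\[
\frac{\delta^2 \mu}{2} \;=\; \frac{t^2}{(2+t)^2}\cdot\frac{(2+t)n}{4} \;=\; \frac{t^2 n}{4(2+t)}.
\]
For $t \ge 2$ we have $2+t \le 2t$, hence $t^2/(4(2+t)) \ge t/8$, so
\[
\Pr(X > (2+t)n) \;\le\; e^{-tn/8} \;\le\; e^{-tn/10},
\]
as claimed.

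The proof is largely mechanical once the coin-flipping reinterpretation is in hand; the only mild obstacle is choosing the right Chernoff variant and confirming that the constant works out for all $t \ge 2$ (the slightly looser $1/10$ rather than $1/8$ gives some slack, which is why the lemma uses the former). If one preferred to avoid the detour through the binomial, an alternative approach would be to apply the Cram\'er--Chernoff bound directly to $X$ using the geometric moment generating function $E[e^{sX_i}] = (e^s/2)/(1 - e^s/2)$, valid for $s < \ln 2$, and optimize over a small fixed $s$; this yields a bound of the same form but is less transparent.
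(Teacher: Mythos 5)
Your approach matches the paper's, which simply cites a Chernoff bound for a sum of $n$ independent geometric random variables (Goodrich--Tamassia); reducing to a binomial via the coin-flipping reinterpretation and applying the standard multiplicative lower-tail bound is exactly how such an estimate is usually derived, and your algebra and the choice $\delta = t/(2+t)$ are correct. One small imprecision is worth flagging: you set $m := (2+t)n$ and speak of ``a sequence of $m$ fair coin flips,'' implicitly assuming $m$ is an integer. In general $\{X > (2+t)n\} = \{X > \lfloor(2+t)n\rfloor\}$ for the integer-valued $X$, so one must take $m = \lfloor(2+t)n\rfloor$, which shifts $\mu$ and $\delta$; for very small $n$ the resulting exponent $(m-2n)^2/(4m)$ can dip below $tn/10$ (e.g.\ $n=1$, $t=2.9$ gives $m=4$ and exponent $1/4 < 0.29$), even though the lemma itself remains true. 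The slack between your $1/8$ and the stated $1/10$ absorbs the rounding once $n$ is modestly large (say $n\ge 5$), small $n$ can be checked directly, and the MGF alternative you mention at the end sidesteps the issue entirely, so this is a bookkeeping matter rather than a conceptual flaw---but a fully rigorous write-up should say a word about it.
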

\begin{proof}
The proof follows immediately
by a Chernoff bound for a sum of $n$ independent geometric random variables
(see, e.g., Goodrich and Tamassia~\cite[pp.~555--556]{goodrich2015algorithm}).
\qed
\end{proof}

Using this lemma, we can prove that JIT zip-zip trees use $O(n)$ total metadata with high probability.

\begin{proof}[of \Cref{thm:jit}]
The set of nodes
in an $r_1$-rank group in $T$ is a sequence of consecutive nodes with rank exactly $r_1$
in an in-order traversal starting from a rank-$r_1$ node, stopping when a node is encountered with greater rank.
All the nodes in this group require $O(1)$ bits to store their $r_1$ rank difference except the root, $v$.
Assuming that the root of this rank group is not the root of the tree, this group has a parent $u$ with rank $r_1' > r_1$. 
The
difference between the $r_1$-rank of $v$ and its parent is
$r_1'-r_1$.
That is, this rank difference is a random variable
that is drawn from a geometric distribution
with success probability $1/2$ (starting at level $r_1+1$); hence,
its expected value is at most $2$.
Further, for similar reasons, the sum of all the $r_1$-rank differences
for all nodes in $T$ that are roots of their rank groups while not being the global root
(like $u$) can be bounded by the sum, $X$, of $n$
independent geometric random variables with success probability $1/2$.  
(Indeed, this is also an over-estimate, since a $r_1$-rank difference
for a parent in the same $r_1$-rank group is 0
.)
By \Cref{lem:sum},
$X$ is $O(n)$ with (very) high probability.
Thus, with (very) high probability, the sum of all $r_1$-rank differences
between children and parents in $T$ is $O(n)$.
Note that the root itself still requires $O(\log{\log{n}})$ bits.

Let us next consider all the $r_2$-ranks in a JIT zip-zip tree.
Recall that each time there is a rank tie when using existing $(r_1,r_2)$ 
ranks, during a given update, we augment the two $r_2$ ranks bit by bit
until they are different. 
That is, the length of each such augmentation 
is a geometric random variable with success probability $1/2$.
Further, by the way that the zip and unzip operations work,
the number of such encounters that could possibly have a rank tie
is upper bounded by the sum of the $r_1$-ranks of the keys involved, i.e., 
by the sum of
$n$ geometric random variables with success probability $1/2$.
Thus, by \Cref{lem:sum}, the number of such encounters is at most 
$N=12n$ and the number of added bits that occur
during these encounters is at most $12N$, with (very) high probability.
\qed\end{proof}

\begin{remark}The expected average number of bits of metadata per node in an RBST is also $O(1)$ if the bits are dynamically allocated. This was observed by Xiaoyang Xu (private communication, 2023).
\end{remark}
\subsection{External Zip-zip Trees}

Zip-zip trees as we have defined them use the \emph{internal} representation of a binary search tree, as do zip trees.  An alternative that is useful in some applications, e.g., Merkle trees~\cite{6233691}, is the \emph{external} representation, in which the items are stored in the external nodes of the tree, and the internal nodes contain only keys, used to guide searches.  It is straightforward to use the external representation, as we now briefly describe.  One important point is that there is one less internal node than external node.  If we want to preserve strong history independence, we must choose a unique item whose key is not in an internal node.  In our version this is the item with smallest key, but it could be the item with largest key instead.

Ignoring the question of ranks, an \emph{external} binary search tree contains a set of items in its external nodes, one item per node.  
We assume each item has a distinct key.
The items are in symmetric order by increasing key: If external node $x$ precedes external node $y$ in symmetric order, the key of the item in $x$ is smaller than the key of the item in $y$.  Each internal node contains the key of the item in the next node in symmetric order, which is the smallest node in symmetric order in its right subtree, reached by starting at the right child and proceeding through left children until reaching an external node.  (See Figure~\ref{fig:external-zip}.)  The item of smallest key is the unique item whose key is not stored in an internal node.  As a special case, a tree containing only one item consists of a single external node containing that item.  Instead of storing keys in internal nodes, we can store pointers to the corresponding external nodes.  Searches proceed down from the root as in an internal binary search tree, but do not stop until reaching an external node (although searches can sometimes be sped up if pointers instead of keys are stored in internal nodes).

An \emph{external zip-zip tree} is an external binary search tree in which each internal node has a rank generated as described for zip-zip trees, with the internal nodes max-heap ordered by rank and ties broken in favor of smaller key.

An insertion into a non-empty external zip-zip tree inserts two nodes into the tree, one external, containing the new item, and one internal.  To insert a new item, we generate a random rank for the new internal node.  We proceed down from the root along the search path for the key of the new item, until reaching an external node or reaching an internal node whose rank is less than that of the new internal node (including tie-breaking).  Let $x$ be the node reached, and let $y$ be its parent.  We unzip the search path from node $x$ down, splitting it into two paths, $P$, containing all nodes on the path with key less than that of the new key, and $Q$, containing all nodes on the path with key greater than that of the new key. 
Nodes on $P$ going down are in increasing order by key, so $P$ becomes a left path; those on $Q$ going down are in decreasing order by key, so $Q$ becomes a right path.  If $x$ was previously the root of the tree, the new internal node becomes the new root; otherwise, the new internal node replaces $x$ as a child of $y$.  The top node of $P$ becomes the left child of the new internal node.  The new external node becomes the left child of the bottom node of $Q$, and the top node of $Q$ becomes the right child of the new internal node.  There is one important exception: If the bottom node of $Q$ is an external node (before the new external node is added), the new external node becomes the left child of the new internal node, and the key of the bottom node on $Q$ becomes the key of the new internal node: In this case, the bottom node on $Q$ contains the item of previously smallest key, and the new item has even smaller key.  The following lemma implies that this insertion algorithm is correct:

\begin{lemma}
If an insertion results in a path $Q$ whose bottom node, say $z$, is external, then path $P$ is empty, $z$ has the smallest key before the insertion, and the key of the new item is less than that of $z$. 
\end{lemma}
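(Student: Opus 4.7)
The plan is to analyze the key comparisons along the full search path for the new key $k$ in the external BST and exploit the invariant that each internal node $v$ stores the smallest key $k_v$ of its right subtree. First I would record the routing rule at an internal node $v$ visited by the search: we go left when $k < k_v$ and right when $k > k_v$ (equality cannot occur since $k$ is new). Consequently, during unzipping, $v$ joins $Q$ exactly when we went left from $v$, and joins $P$ exactly when we went right from $v$. In particular, $P$ is empty iff the unzipped portion of the search path contains no right turn; since the portion of the path above $x$ is not part of the unzip, it is equivalent (and more convenient) to argue that the \textit{entire} root-to-leaf search path makes no right turn, because the two notions coincide on the unzipped suffix.

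The key structural claim is: if the search path contains any right turn, then the external node at its bottom has key strictly less than $k$. To prove this, let $v$ be the \textit{deepest} node on the search path at which we turned right; by the choice of $v$, from $v$'s right child onward the descent uses only left children and terminates at the leftmost external node of $v$'s right subtree. By the defining invariant of internal keys, that external node's key is exactly $k_v$, and since we turned right at $v$ we have $k > k_v$, so the external endpoint has key $k_v < k$. Such an endpoint belongs to $P$, not $Q$, contradicting the hypothesis that the bottom of $Q$ is external.

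The contrapositive yields all three conclusions of the lemma at once. Assume the bottom of $Q$ is an external node $z$, so $z$'s key exceeds $k$. By the claim, the search path makes no right turns, hence the unzipped portion contains no node destined for $P$, giving $P = \emptyset$. Moreover the root-to-$z$ path consists solely of left descents, so $z$ is the leftmost external node of the entire tree and therefore stores the smallest key in the tree before insertion. The remaining assertion $k < z.\mathrm{key}$ is just the assumption $z \in Q$.

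The main obstacle, and really the only subtle point, is bookkeeping: one must verify that the split of unzipped nodes into $P$ versus $Q$ by key comparison is the same as the split by direction of descent, and that reasoning about the whole root-to-leaf search path is legitimate even though only the portion from $x$ downward is unzipped. Ranks play no role in the argument, since ranks only determine \textit{where} $x$ falls along a path whose shape is fixed by key comparisons alone, so the deepest-right-turn analysis goes through without modification.
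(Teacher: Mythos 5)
Your proof is correct and is, at bottom, the same argument as the paper's, viewed through a contrapositive lens. The paper assumes $z$ is not the item of smallest key, produces the internal ancestor $w$ with $w.key = z.key$ (which exists precisely by the invariant you invoke), and derives the contradiction that the search must turn left at $w$ while $z$ sits in $w$'s right subtree. Your ``deepest right turn'' $v$ is exactly this node $w$ (from $v$'s right child the path is all left turns, so $z$ is the leftmost external node of $v$'s right subtree and $v.key = z.key$), and your contradiction ($z.key = v.key < k$ versus $z \in Q$ forcing $z.key > k$) is a key-comparison rephrasing of the paper's ``$z$ cannot lie on the search path.'' One thing you add that the paper leaves implicit is the explicit observation that the $P$/$Q$ split is the descent-direction split for internal nodes, which makes the ``$P$ empty $\iff$ no right turns in the unzipped suffix'' step crisp, and the remark that it is safe to reason over the whole root-to-leaf path rather than only the unzipped suffix; both are good bookkeeping but do not change the underlying approach.
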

\begin{proof}
Suppose $z$ is external.  If $z$ did not have the smallest key before the insertion, then in the tree before the insertion there is an internal node that is an ancestor of $z$ and contains the same key.  Let this node be $w$.  The search for the new key visits $w$ and must proceed to the left child of $w$, since since $z$ is on $Q$ and hence must have smaller key than the new key.  But $z$ is in the right subtree of $w$ and hence cannot be on the search path for the new key, a contradiction.  It follows that $z$ has the smallest key before the insertion, which further implies that $P$ is empty. 
\end{proof}

Deletion is the inverse of insertion: Search for the internal node having the key to be deleted.  Zip together the right spine of its left subtree and the left spine of its right subtree. deleting the bottom node on the latter, which is the external node whose item has the key to be deleted.  Replace the internal node having the key to be deleted by the top node on the zipped path.  If the search reaches an external node, delete this external node and its parent; replace the deleted parent by its right child.


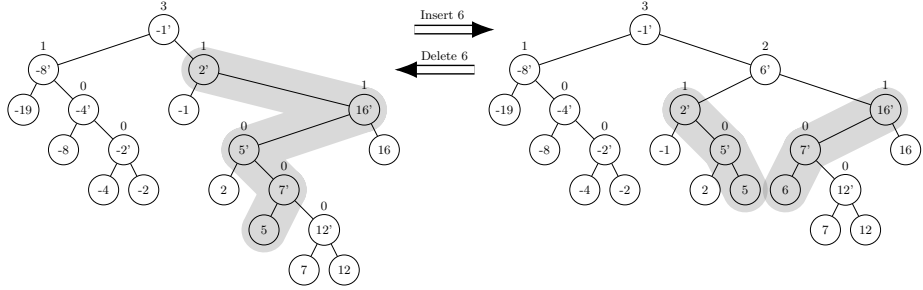
\begin{figure}[hbt]
    \centering
    \resizebox{\linewidth}{!}{\begin{tikzpicture}[n/.style = {circle, draw, minimum width = 0.75cm, minimum height=0.5cm},a/.style = {line width=1pt, double distance=5pt,
             arrows = {-Latex[length=0pt 4 .5]}}]
    \def\ziptreeone{
        {-19//2},
        {-8'/1/1},
        {-8//3},
        {-4'/0/2},
        {-4//4},
        {-2'/0/3},
        {-2//4},
        {-1'/3/0},
        {-1//2},
        {2'/1/1},
        {2//4},
        {5'/0/3},
        {5//5},
        {7'/0/4},
        {7//6},
        {12'/0/5},
        {12//6},
        {16'/1/2},
        {16//3}%
    }

    \foreach \key\rank\height [count=\i] in \ziptreeone{
        \pgfmathsetmacro\x{0.5*\i};
        \node[n,label=\rank] (\key) at (\x, -\height) {\key};
    }

    \draw[-] (-1') -- (-8') -- (-19)
             (-8') -- (-4') -- (-8)
             (-4) -- (-2') -- (-2)
             (-4') -- (-2')
             (12) -- (12') -- (7') -- (5') -- (16') -- (2') -- (-1')
             (7) -- (12')
             (5) -- (7')
             (2) -- (5')
             (-1) -- (2')
             (16) -- (16');

    \begin{pgfonlayer}{background}
        \fill[gray,opacity=0.3] \convexpath{2',16',5',7',5,7',5',16'}{16pt};
    \end{pgfonlayer}

    \draw [a] (10.25,0) -- node[label={[xshift=-0.3cm]Insert 6}] {} (12.25,0);
    \draw [a] (11.75,-1) -- node[label={[xshift=0.25cm]Delete 6}] {} (9.75,-1);

    \def\ziptreetwo{
        {-19//2},
        {-8'/1/1},
        {-8//3},
        {-4'/0/2},
        {-4//4},
        {-2'/0/3},
        {-2//4},
        {-1'/3/0},
        {-1//3},
        {2'/1/2},
        {2//4},
        {5'/0/3},
        {5//4},
        {6'/2/1},
        {6//4},
        {7'/0/3},
        {7//5},
        {12'/0/4},
        {12//5},
        {16'/1/2},
        {16//3}%
    }

    \foreach \key\rank\height [count=\i] in \ziptreetwo{
        \pgfmathsetmacro\x{0.5*\i+12};
        \node[n,label=\rank] (\key) at (\x, -\height) {\key};
    }

    \draw[-] (6') -- (-1') -- (-8') -- (-19)
             (-8') -- (-4') -- (-8)
             (-4) -- (-2') -- (-2)
             (-4') -- (-2')
             (5) -- (5') -- (2') -- (6') -- (16') -- (7') -- (12')
             (-1) -- (2')
             (2) -- (5')
             (7) -- (12') -- (12)
             (6) -- (7')
             (16) -- (16');

    \begin{pgfonlayer}{background}
        \fill[gray,opacity=0.3] \convexpath{2',5',5,5',2'}{16pt};
        \fill[gray,opacity=0.3] \convexpath{16',7',6,7',16'}{16pt};
    \end{pgfonlayer}
\end{tikzpicture}}
    
    \caption{\label{fig:external-zip} How insertion in an external zip tree is done via unzipping and deletion is done via zipping. Comparison nodes are represented with a prime symbol. Analogous to the operation depicted in \Cref{fig:zip}. }
\end{figure}



\subsection{Depth Analysis}
The main theoretical result of this paper is the following.

\begin{theorem}
\label{thm:harmonic}
The expected depth, $\delta_j$, of the $j$-th smallest key 
in a zip-zip tree, $T$, storing
$n$ keys is equal to $H_j+H_{n-j+1}-1+o(1)$, where $H_n=\sum_{i=1}^n (1/i)$ is
the $n$-th harmonic number.
\end{theorem}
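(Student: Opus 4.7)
The plan is to decompose $\delta_j$ as a sum of ancestor indicators, compute each indicator's expectation by exchangeability, and show that the discreteness of the $(r_1,r_2)$ ranks contributes only an $o(1)$ correction to the classical treap expression $H_j+H_{n-j+1}-1$.

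Write $\delta_j=1+\sum_{i\ne j}A_{i,j}$, where $A_{i,j}$ indicates that $x_i$ is a proper ancestor of $x_j$. Because zip-zip trees are max-heap ordered by $(r_1,r_2)$ with ties broken in favor of the smaller key, $A_{i,j}=1$ iff $x_i$'s rank is ``top'' in the key interval $S_{ij}=\{x_{\min(i,j)},\dots,x_{\max(i,j)}\}$: for $i<j$ the key $x_i$ is the smallest in $S_{ij}$ and wins ties, so ``top'' means lex-maximum with ties permitted, while for $i>j$ the key $x_i$ is the largest and loses ties, so ``top'' means strict lex-maximum. Let $m=|S_{ij}|=|i-j|+1$ and let $|M_m|$ denote the multiplicity of the lex-maximum rank in $m$ i.i.d.\ $(r_1,r_2)$ samples. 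By exchangeability of the ranks,
\[
\Pr(A_{i,j}=1)=\frac{E[|M_m|]}{m}\text{ if }i<j,\qquad \Pr(A_{i,j}=1)=\frac{\Pr(|M_m|=1)}{m}\text{ if }i>j.
\]
In the continuous idealization $|M_m|\equiv 1$ and both collapse to $1/m$, so summing over $i$ recovers the classical treap sum $H_j+H_{n-j+1}-2$; together with the leading $+1$ this gives the claimed main term.

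It remains to bound the tie corrections $E[|M_m|]-1$ and $\Pr(|M_m|\ge 2)$ uniformly in $m$. Let $K$ denote the multiplicity of the maximum $r_1$-value in an $m$-sample. The key estimate is $E[K^2]=O(1)$ independent of $m$; this follows from the identity
\[
E[K(K-1)]=m(m-1)\sum_{k\ge 0}\bigl(1/2^{k+1}\bigr)^2\bigl(1-1/2^{k+1}\bigr)^{m-2},
\]
whose sum is dominated by $k\approx\log_2 m$ and contributes $\Theta(1/m^2)$, canceling the $m(m-1)$ factor. Given $K$, the $r_2$-values of the $r_1$-tied nodes are independent uniform samples from $[1,\log^c n]$, so by a birthday bound the chance of any $r_2$-collision among them is $O(K^2/\log^c n)$; taking expectation over $K$ yields $\Pr(|M_m|\ge 2)=O(1/\log^c n)$ and $E[|M_m|]-1=O(1/\log^c n)$, both independent of $m$. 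Summing the $1/m$-weighted tie corrections over $m=2,\dots,j$ and $m=2,\dots,n-j+1$ produces a total error of $O(\log n/\log^c n)=o(1)$ for any $c\ge 2$, completing the bound.

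The main technical obstacle is the uniform-in-$m$ bound $E[K^2]=O(1)$ for the $r_1$-max multiplicity; without it, the tie-error accumulates too slowly when summed over $i$. The estimate rests on the spread-out nature of the geometric distribution: the maximum of $m$ geometric-$(1/2)$ variables lies in a constant-width window around $\log_2 m$ with high probability, so its multiplicity concentrates on $O(1)$ regardless of how large $m$ is. Everything else is routine.
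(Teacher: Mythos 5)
Your argument is correct, and it takes a genuinely different route from the paper's. The paper conditions on three global events---$A$ (the $r_1$-rank of the root, or the total size of the $r_1$-rank groups along $x_j$'s ancestor path, is abnormally large), $B$ (given $A$ fails, a full rank tie occurs inside one of those groups), and $C$ (neither)---then bounds $E[X\mid A]\Pr(A)$ and $E[X\mid B]\Pr(B)$ crudely and computes the clean harmonic sum under $C$. That global conditioning is why the paper requires $c\ge 4$ for an $o(1)$ error (the footnote notes $c=3$ gives only $O(1)$), and the assertion that $\Pr(X_i=1\mid C)=1/(|i-j|+1)$ treats exchangeability under conditioning somewhat informally. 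You instead compute each ancestor-indicator probability exactly and unconditionally: $\Pr(A_{i,j}=1)=E[|M_m|]/m$ for $i<j$ and $\Pr(|M_m|=1)/m$ for $i>j$, where $|M_m|$ is the lex-max multiplicity among $m$ i.i.d.\ rank pairs (the ancestor characterization and the exchangeability step both check out). The entire tie-correction then reduces to two uniform-in-$m$ estimates, $E[|M_m|]-1=O(1/\log^c n)$ and $\Pr(|M_m|\ge 2)=O(1/\log^c n)$, which you get from $E[K^2]=O(1)$ for the $r_1$-max multiplicity (your second-factorial-moment identity and the $q\approx 1/m$ localization are correct) plus a birthday bound on the $K$ uniform $r_2$-samples; summing the $1/m$-weighted corrections gives $O(\log n/\log^c n)=o(1)$ already for $c\ge 2$. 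This per-indicator argument avoids the paper's reliance on \Cref{lem:sum} and on conditional exchangeability, and yields a weaker requirement on $c$. One small step worth spelling out: to pass from $E[K^2]=O(1)$ to $E[|M_m|]-1=O(1/\log^c n)$, bound $|M_m|-1$ by the number of tied pairs among the $K$ uniform $r_2$-samples, whose conditional expectation is $\binom{K}{2}/\log^c n$; bounding it instead by $K$ times the indicator of a collision would require a third moment of $K$ (which is also $O(1)$, but the pair-count route is what makes $E[K^2]=O(1)$ suffice as you claim).
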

\begin{proof}
Let us denote the ordered list of (distinct) keys stored in $T$ as
$L=(x_1,x_2,\ldots,x_n)$,
where we use ``$x_j$'' to denote both the node in $T$ and the key
that is stored there.
Let $X$ be a random variable equal to the depth of the $j$-th 
smallest key, $x_j$, in $T$, and note that
\[
X=\sum_{i=1,\ldots,j-1,j+1,\ldots,n} X_i,
\]
where $X_i$ is an indicator random variable that is $1$ iff $x_i$ is
an ancestor of $x_j$.
Let $A$ denote the event where the $r_1$-rank of the root, $z$, of $T$
is more than $3\log n$, 
or the total size of all the
$r_1$-rank groups of $x_j$'s ancestors is more than $d\log n$, for a suitable
constant, $d$, chosen so that, by \Cref{lem:sum}, 
$\Pr(A)\le 2/n^2$.
Let $B$ denote the event, conditioned on $A$ not occurring, where
the $r_1$-rank group of an ancestor of $x_j$ contains
two keys with the same rank, i.e., their ranks are tied even
  after doing a lexicographic rank comparison.
Note that, conditioned on $A$ not occurring,
and assuming $c\ge 4$ (for the sake of a $o(1)$ additive term\footnote{Taking
  $c=3$ would only cause an $O(1)$ additive term.}),
the probability that any two keys in any 
of the $r_1$-rank groups of $x_j$'s ancestors have a tie among their
$r_2$-ranks is at most $d^2\log^2 n/\log^4 n$; hence, $\Pr(B)\le d^2/\log^2 n$.
Finally, let $C$ denote the complement event to both $A$ and $B$, that is,
the $r_1$-rank of $z$ is less than $3\log n$ and each $r_1$-rank
group for an ancestor of $x_j$ has keys with unique $(r_1,r_2)$ rank pairs.
Thus, by the definition of conditional expectation,
\begin{eqnarray*}
\delta_j = E[X] &=& E[X|A]\cdot \Pr(A) + E[X|B]\cdot \Pr(B) + E[X|C]\cdot \Pr(C) \\
     &\le& \frac{2n}{n^2} + \frac{d^3\log n}{\log^2 n} + E[X|C] \\
     &\le& E[X|C] + o(1).
\end{eqnarray*}
So, for the sake of deriving an expectation for $X$, let us assume
that the condition $C$ holds.
Thus, for any $x_i$, where $i\not= j$, $x_i$ is an ancestor of $x_j$
iff $x_i$'s rank pair, $r=(r_1,r_2)$, is the unique maximum such rank
pair for the keys from $x_i$ to $x_j$, inclusive, in $L$
(allowing for either case of $x_i<x_j$ or $x_j<x_i$, and doing rank comparisons
lexicographically).
Since each key in this range has equal probability of being assigned
the unique maximum rank pair among the keys in this range,
\[
\Pr(X_i=1)= \frac{1}{|i-j|+1} .
\]
Thus, 
by the linearity of expectation,
\[
E[X|C] = H_j + H_{n+1-j} -1.
\]
Therefore, $\delta_j= H_j + H_{n+1-j} -1 + o(1)$.
\qed
\end{proof}

This immediately gives us the following:

\begin{corollary}
The expected depth, $\delta_j$, of the $j$-th smallest key 
in a zip-zip tree, $T$, storing $n$ keys can be bounded as follows:
\begin{enumerate}
\item
\label{e1}
If $j=1$ or $j=n$, then 
$\delta_j < \ln n + \gamma +o(1) < 0.6932\log n+\gamma+o(1)$,
where $\gamma=0.57721566\ldots$ is the Euler-Mascheroni constant.
\item
\label{e2}
For any $1\le j\le n$, 
$\delta_j < 2\ln n -1 + o(1) < 1.3863\log n-1+o(1)$.
\end{enumerate}
\end{corollary}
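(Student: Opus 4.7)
The plan is to plug in \Cref{thm:harmonic} and use standard harmonic-number asymptotics. \Cref{thm:harmonic} gives $\delta_j = H_j + H_{n+1-j} - 1 + o(1)$, and the expansion $H_m = \ln m + \gamma + O(1/m)$ does essentially all the work; the rest is a constant comparison.

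For part \ref{e1}, I would simply substitute $j \in \{1,n\}$. Since $H_1 = 1$, the theorem reduces to $\delta_j = H_n + o(1) = \ln n + \gamma + o(1)$, and converting to base-$2$ logarithm via $\ln n = (\ln 2)\log n < 0.6932\log n$ yields the claimed bound.

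For part \ref{e2}, the key input is concavity of the harmonic numbers: because the forward differences $H_{m+1}-H_m = 1/(m+1)$ are decreasing, one has $H_j + H_{n+1-j} \le 2H_{\lceil(n+1)/2\rceil}$ for every integer $1 \le j \le n$. Combined with the asymptotic expansion, this yields $H_j + H_{n+1-j} \le 2\ln n - 2\ln 2 + 2\gamma + o(1)$. The decisive constant comparison is $\gamma < \ln 2$ (equivalently $2\gamma - 2\ln 2 \approx -0.23 < 0$), so subtracting $1$ gives $\delta_j < 2\ln n - 1 + o(1)$, and the final conversion $2\ln n = (2\ln 2)\log n < 1.3863\log n$ concludes.

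There is no real obstacle here, since \Cref{thm:harmonic} does the heavy lifting. The only two points worth a sentence of care are (i) that the $o(1)$ term from \Cref{thm:harmonic} is uniform in $j$, which is immediate from its proof (the error bounds depend only on $n$, via $\Pr(A)$ and $\Pr(B)$), and (ii) that the concavity maximization is applied on integers rather than a continuum, but the resulting off-by-one when $(n+1)/2$ is not an integer is absorbed into the $o(1)$.
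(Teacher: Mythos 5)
Your proposal is correct and follows essentially the same route as the paper: plug in \Cref{thm:harmonic}, expand the harmonic numbers, maximize over $j$ at the middle, and close with the constant comparison $\gamma < \ln 2$. The only cosmetic difference is ordering: you maximize $H_j + H_{n+1-j}$ first via concavity of the harmonic sequence and then expand, while the paper applies Franel's inequality $H_m < \ln m + \gamma + \tfrac{1}{2m}$ first and then maximizes the product $j(n-j+1)$; these are interchangeable since $\ln$ is concave, and both hinge on the same decisive inequality $\gamma < \ln 2$.
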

\begin{proof}
The bounds all follow from \Cref{thm:harmonic},
the fact that $\ln 2=0.69314718\ldots$, and 
Franel's inequality (see, e.g., Guo and Qi~\cite{GUO2011991}):
\[
H_n < \ln n + \gamma + \frac{1}{2n}.
\]
Thus, for (\myref{e1}), if $j=1$ or $j=n$, 
$\delta_j = H_n < \ln n + \gamma +o(1)$.

For (\myref{e2}), if $1\le j\le n$,
\begin{eqnarray*}
\delta_j &=& H_j+H_{n-j+1}-1 \\
         &<& \ln j + \ln (n-j+1) + 2\gamma - 1 + o(1) \\
         &\le& 2\ln n -1 + o(1),
\end{eqnarray*}
since $\ln 2 > \gamma$ and
$j(n-j+1)$ is maximized at $j=n/2$ or $j=(n+1)/2$.
%
\qed\end{proof}

Incidentally, these bounds are actually tighter than those derived
by Seidel and Aragon for treaps~\cite{treaps}, but similar bounds
can be shown to hold for treaps.

\subsection{Height Analysis}
We similarly prove tighter bounds for the height of zip-zip trees.

\begin{theorem} The height of a zip-zip tree, $T$, holding a set, $S$, of $n$ keys is 
at most $3.82\log n$ with probability $1-o(1)$.
\end{theorem}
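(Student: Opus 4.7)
The plan is to extend \Cref{thm:harmonic}'s per-node depth analysis into a high-probability bound on the maximum depth, via a Chernoff bound on each node's depth and a union bound over the $n$ keys.

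Fix an arbitrary node $x_j$ and, following the setup of \Cref{thm:harmonic}'s proof, work on the good event $C$ (maximum $r_1$-rank at most $3\log n$ and no rank-pair collisions in the ancestor-relevant portion of $T$). Under $C$, the depth of $x_j$ is $D_j = \sum_{i\neq j} X_i$, where each $X_i$ is Bernoulli with $\Pr(X_i = 1 \mid C) = 1/(|i-j|+1)$ and $\mu := E[D_j \mid C] \le 2\ln n$. The new ingredient beyond \Cref{thm:harmonic} is that under $C$ the family $\{X_i\}_{i\neq j}$ is \emph{mutually independent}: on $C$, the lexicographic order of the rank pairs is, by exchangeability of the i.i.d.\ pairs conditioned on distinctness, a uniformly random permutation of the relevant keys, so the zip-zip tree restricted to those keys is distributed as a random BST and Mulmuley's classical theorem on the mutual independence of ancestor indicators in random BSTs applies verbatim.

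Given independence, the Chernoff bound yields
\[
\Pr(D_j > h \mid C) \;\le\; \exp\bigl(-\mu\,(t \ln t - t + 1)\bigr),\qquad t = h/\mu.
\]
Plugging in $h = 3.82\log n$ gives $t = 3.82/(2\ln 2) \approx 2.756$ and $t \ln t - t + 1 \approx 1.04$, so the right-hand side is at most $n^{-2.07}$. A union bound over the $n$ keys then gives $\Pr(\text{height}(T) > 3.82\log n \mid C) \le n \cdot n^{-2.07} = o(1)$, and combining with $\Pr(\bar C) = o(1)$ concludes the proof.

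The main obstacle is arranging $\Pr(\bar C) = o(1)$ \emph{globally}, so that the union bound does not blow up: a naive per-node definition of $C$ along the lines of \Cref{thm:harmonic}'s proof yields $\sum_j \Pr(\bar C_j) = \omega(1)$ for constant $c$ in the secondary-rank range, so care is needed to formulate a global good event (or to handle the residual rank-pair collisions directly inside the Chernoff step, for instance by enlarging the secondary-rank universe slightly for this argument). Once this is in place, the Chernoff optimization pinning down the specific constant $3.82$ — which sits comfortably above the $\approx 3\log n$ threshold at which the method first succeeds — is routine.
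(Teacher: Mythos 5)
Your outline tracks the paper's strategy closely—same Chernoff-plus-union-bound architecture, same good event $C$, and you even land on the paper's constant $3.82\log n$—but the central independence claim is wrong, and the paper's proof sidesteps precisely the difficulty you gloss over.

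The assertion that, conditioned on $C$, the ancestor indicators $\{X_i\}_{i\neq j}$ are mutually independent is false. Take $n=3$ and $j=2$: then $\Pr(X_1=1)=\Pr(X_3=1)=1/2$, but over the six permutations one computes $\Pr(X_1=1\text{ and }X_3=1)=1/3\neq 1/4$, so $X_1$ and $X_3$ are positively correlated. What is true (the classical ``records'' fact) is that the indicators for ancestors on the \emph{left} of $x_j$ are mutually independent among themselves, and likewise for those on the \emph{right}, but the two sides are not jointly independent. ``Mulmuley's theorem'' is about backwards analysis/probabilistic games and does not supply the joint-independence statement you invoke. This is exactly why the paper splits the depth into $X = 1 + L_i + R_i$ and applies the Chernoff bound to $L_i$ and $R_i$ \emph{separately} (with $\mu\approx\ln n$ each and $\delta=1.75$, giving $\Pr(L_i>2.75\ln n)\le n^{-\log 2.04}$), then combines via a union bound; no cross-side independence is ever needed. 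Your calculation applies Chernoff to $L_i+R_i$ directly with $\mu=2\ln n$, which is exactly what the false independence claim would have licensed, so the derived exponent $n^{-2.07}$ is not justified by your argument. The fix is routine—just do the left/right split—and recovers the same constant since $2\cdot 2.75\cdot\ln 2\approx 3.81$.

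Your secondary concern, that the per-node good event $C_j$ has $\Pr(\bar C_j)=\Theta(1/\log^2 n)$ and thus does not union-bound to $o(1)$ over $n$ keys, is legitimate; the paper's treatment of this point (a one-line appeal to Markov on $E[X\mid A]$ and $E[X\mid B]$) is itself terse, and your suggested remedies (a global good event, or enlarging the $r_2$ universe for the height analysis) are reasonable ways to make that step airtight. But the independence error is the substantive gap: as written, the core Chernoff step rests on a false premise.
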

\begin{proof}
As in the proof of \Cref{thm:harmonic},
we note that the depth, $X$, in $T$
of the $i$-th smallest key, $x_i$, can be characterized as follows.
Let
$$
L_i = \sum_{1\le j<i} X_{j}, \mbox{~~~~and~~~~}
R_i = \sum_{i<j\le n} X_{j}, 
$$
where $X_{j}$ is a 0-1 random variable
that is 1 if and only if $x_j$ is an ancestor of $x_i$, where $x_i$
is the $i$-th smallest key in $S$ and $x_j$ is the $j$-th smallest key.
Then $X=1+L_i+R_i$.
Further, note that the random variables that are summed in $L_i$ 
(or, respectively, $R_i$)
are independent, and, focusing on $E[X|C]$, as in the proof of \Cref{thm:harmonic},
$E[L_i]=H_i-1$ and $E[R_i]=H_{n-i+1}-1$, where
$H_m=\sum_{k=1}^m 1/k$ is the $m$-th Harmonic number; hence,
$E[X|C]= H_i + H_{n-i+1}-1< 2\ln n-1$.
Thus, we can apply a Chernoff bound to characterize $X$ by
bounding $L_i$ and $R_i$ separately (w.l.o.g., we focus on $L_i$), conditioned
on $C$ holding.
For example, for the high-probability bound for the proof, it is sufficient
that, for some small constant, $\epsilon>0$,
there is a reasonably small $\delta>0$ such that
$$
P(L_i > (1+\delta)\ln n)<
2^{-((1+\epsilon)/\ln 2)(\ln 2)\log n} 
=2^{-(1+\epsilon)\log n} 
= 1/n^{1+\epsilon}, 
$$
which would establish the theorem by a union bound.
In particular, we choose $\delta=1.75$ and let $\mu=E[L_i]$. Then by a 
Chernoff bound, e.g., 
see~\cite{alon,hp,mitz,motwani,shiu}, for $\mu=\ln n$, we have the following:
\begin{eqnarray*}
\Pr(L_i>2.75\ln n )
&=& \Pr(L_i>(1+\delta )\mu ) \\
&<&\left({\frac {e^{\delta }}{(1+\delta )^{1+\delta }}}\right)^{\mu } \\
&=&\left({\frac {e^{1.75}}{2.75^{2.75}}}\right)^{\ln n } \\
&\le& 2.8^{-(\ln 2)\log n} \\
&\le& 2.04^{-\log n} \\
&=&\frac{1}{n^{\log 2.04}},
\end{eqnarray*}
which establishes the above bound for $\epsilon=\log_2 2.04-1>0$.
Combining this with a similar bound for $R_i$, and the derived from
Markov's inequality with respect to $E[X|A]$ and $E[X|B]$, given in the 
proof of \Cref{thm:harmonic} for the conditional events $A$ and $B$,
we get that the height of
of a zip-zip tree is at most $$2(2.75)(\ln 2)\log n\le 3.82\log n,$$
with probability $1-o(1)$.
\qed\end{proof}

\subsection{Making Zip-zip Trees Partially Persistent}
A data structure that can be updated in a current version while also allowing
for queries in past versions is said 
to be \emph{partially persistent},
and Driscoll, Sarnak, Sleator, and Tarjan~\cite{DRISCOLL198986}
show how to make any bounded-degree linked structure, like a binary search tree,
$T$,
into a partially persistent data structure by utilizing techniques employing
``fat nodes'' and ``node splitting.''
They show that if a sequence of $n$ updates on $T$ 
only modifies $O(n)$ data fields and pointers, then $T$ can be made partially
persistent with only an constant-factor increase in time and space for
processing the sequence of updates, and allows for queries in any past instance
of $T$.
We show below that zip-zip trees have this property, w.h.p., thereby proving
the following theorem.

\begin{theorem}
\label{thm:persistent}
One can transform an initially empty zip-zip tree, $T$, 
to be partially persistent, over the course of $n$ insert and delete
operations, so as to support, w.h.p.,
$O(\log n)$ amortized-time updates in the current version
and $O(\log n)$-time queries in the current or past versions, using
$O(n)$ space.
\end{theorem}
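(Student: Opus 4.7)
The plan is to apply the Driscoll-Sarnak-Sleator-Tarjan (DSST) framework~\cite{DRISCOLL198986} for making bounded-degree linked structures partially persistent. DSST states that if a sequence of $n$ updates on a bounded-degree linked structure modifies only $O(n)$ data fields and pointers in total, then the structure can be made partially persistent with $O(1)$ amortized overhead per actual modification in both time and space, while preserving the ability to query any past or current version with only a constant-factor slowdown. Since a zip-zip tree is a binary search tree, and therefore of bounded degree, the task reduces to showing that $n$ insertions and deletions induce only $O(n)$ pointer modifications with high probability.

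First I would analyze insertion. An insertion of a key $x$ with rank pair $r=(r_1,r_2)$ walks down the search path until reaching the first node whose rank is lexicographically less than $r$ (with appropriate tie-breaking); the new node is spliced in as a child of that point's parent, and the remainder of the search path is unzipped into two spines $P$ and $Q$. The pointer modifications are exactly: one pointer from the parent into the new node, the two child pointers of the new node, and at most one pointer per node on $P \cup Q$. Thus each insertion incurs $O(1+|P|+|Q|)$ modifications, where $|P|+|Q|$ equals the length of the search-path suffix below the insertion point. Deletion is symmetric: the zip operation rewires pointers along the right spine of the left subtree and the left spine of the right subtree of the deleted node, giving $O(1+\ell)$ modifications, where $\ell$ is the total spine length.

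The key claim is that the expected length of the unzipped suffix (respectively, the zipped spine length) is $O(1)$. This follows by the same geometric-tail argument used to analyze zip-tree updates in~\cite{zip}: conditioned on reaching depth $d$ below the modification point, the probability of descending one further level is bounded away from $1$ because doing so requires a node whose $r_1$-rank is dominated by the inserted or deleted rank, which happens with probability at most $1/2$ by the geometric distribution of $r_1$. Hence both $|P|+|Q|$ and $\ell$ are stochastically dominated by a geometric random variable with success probability $1/2$, giving expectation at most $2$ and, by \Cref{lem:sum}, a sum over $n$ operations that is $O(n)$ with probability $1-e^{-\Omega(n)}$.

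Combining these ingredients, DSST yields a partially persistent representation using $O(n)$ space w.h.p., with $O(1)$ amortized overhead per pointer modification. The $O(\log n)$-amortized-time update bound then follows from the $O(\log n)$ expected search cost (which holds w.h.p.\ by the height bound proved earlier), plus $O(1)$ amortized per modification; queries in any version take $O(\log n)$ time by \Cref{thm:harmonic} applied to the version queried. The main obstacle is the per-update $O(1)$-expected modification count: one must argue carefully that the unzip/zip path length below the modified node has a geometric tail, using the rank distribution and the independence of new random ranks from the existing tree, in the spirit of the classical $O(1)$-expected rotation count per update in treaps.
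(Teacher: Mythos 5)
Your overall plan---invoke Driscoll--Sarnak--Sleator--Tarjan, show the total number of pointer modifications over $n$ operations is $O(n)$ w.h.p., and close with \Cref{lem:sum}---is exactly the route the paper takes, and the framing of the time bounds at the end is fine. The gap is in the middle step: the claim that the length of the unzip suffix $|P|+|Q|$ (and likewise the zipped spine length $\ell$) is \emph{stochastically dominated by a single geometric random variable with success probability $1/2$} is false, and the justification you give for it does not hold up. You argue that ``conditioned on reaching depth $d$ below the modification point, the probability of descending one further level is at most $1/2$,'' but once the inserted rank $r_1$ is fixed, every node on the unzip path already has rank less than $r_1$ by definition, and a fresh node has rank less than $r_1$ with probability $1-2^{-r_1}$, which approaches $1$ as $r_1$ grows; it is not bounded by $1/2$. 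Concretely, inserting a node with $r_1 = \Theta(\log n)$ causes it to land near the root, and the unzip path then has length $\Theta(\log n)$, which no fixed-parameter geometric variable dominates. The per-operation modification count is \emph{not} $O(1)$-geometric; it scales with the rank $r_1$ of the modified key.

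The paper avoids this by bounding the \emph{aggregate} directly: the total number of pointer changes over $n$ insertions and deletions is $O\bigl(\sum_i r_1^{(i)}\bigr)$, i.e.\ a constant times the sum of the $n$ geometric $r_1$-ranks of the keys involved (this is transparent through the skip-list isomorphism, where a level-$\ell$ insertion splices $O(\ell)$ pointers). \Cref{lem:sum} then gives $O(n)$ with very high probability, without ever asserting that any individual operation's path length has a constant-parameter geometric tail. Your argument reaches the right sum but via an intermediate per-operation claim that is not true; replacing your ``each path is geometric'' step with ``the total is bounded by the sum of $r_1$-ranks'' repairs the proof and brings it in line with the paper's.
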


\begin{proof}
By the way that the zip and unzip operations work,
the total number of data or pointer changes in $T$ over the course
of $n$ insert and delete operations can be 
upper bounded by the sum of $r_1$-ranks for all the keys involved, i.e., 
by the sum of $n$ geometric random variables with success probability $1/2$.
Thus, by \Cref{lem:sum}, the number of data or pointer changes in
$T$ is at most 
$N=12n$ with (very) high probability.
Driscoll, Sarnak, Sleator, and Tarjan~\cite{DRISCOLL198986}
show how to make any bounded-degree linked structure, like a binary search tree,
$T$,
into a partially persistent data structure by utilizing techniques employing
``fat nodes'' and ``node splitting,''
so that
if a sequence of $n$ updates on $T$ 
only modifies $O(n)$ data fields and pointers, then $T$ can be made partially
persistent with only an constant-factor increase in time and space for
processing the sequence of updates, and this allows for queries 
in any past instance
of $T$ in the same asymptotic time as in the ephemeral version of $T$
plus the time to locate the appropriate prior version.
Alternatively, 
Sarnak and Tarjan~\cite{sarnak1986planar} provide a simpler set of 
techniques
that apply to binary search trees without parent parent pointers.
Combining these facts establishes the theorem.
\qed\end{proof}

For example, we can apply this theorem
with respect to a sequence of $n$ updates of a zip-zip tree
that can be performed in $O(n\log n)$ time and $O(n)$
space w.h.p., e.g., to provide a simple construction
of an $O(n)$-space planar point-location data structure
that supports $O(\log n)$-time queries. 
A similar construction was provided by 
Sarnak and Tarjan~\cite{sarnak1986planar}, based on
the more-complicated red-black tree data structure; hence,
our construction can be viewed as simplifying their construction.

\section{Experiments}

We augment our theoretical findings with experimental
results, where we repeatedly constructed search
trees with keys, $\{0,1,\ldots, n - 1\}$,
inserted in order 
(since insertion order doesn't matter).
\ifFull
Randomness was obtained by using a linear
congruential pseudo-random generator.
\fi
For both uniform zip trees and zip-zip trees with static $r_2$-ranks,
we draw integers independently for the uniform ranks from the intervals $[1, n^c]$, and $[1, \log^c n]$, respectively,
choosing $c = 3$.


\begin{figure}[bt!]
    \centering
\vspace*{-4pt}
    \includegraphics[width=.85\textwidth,trim={0, 0, 0, 1cm}, clip]{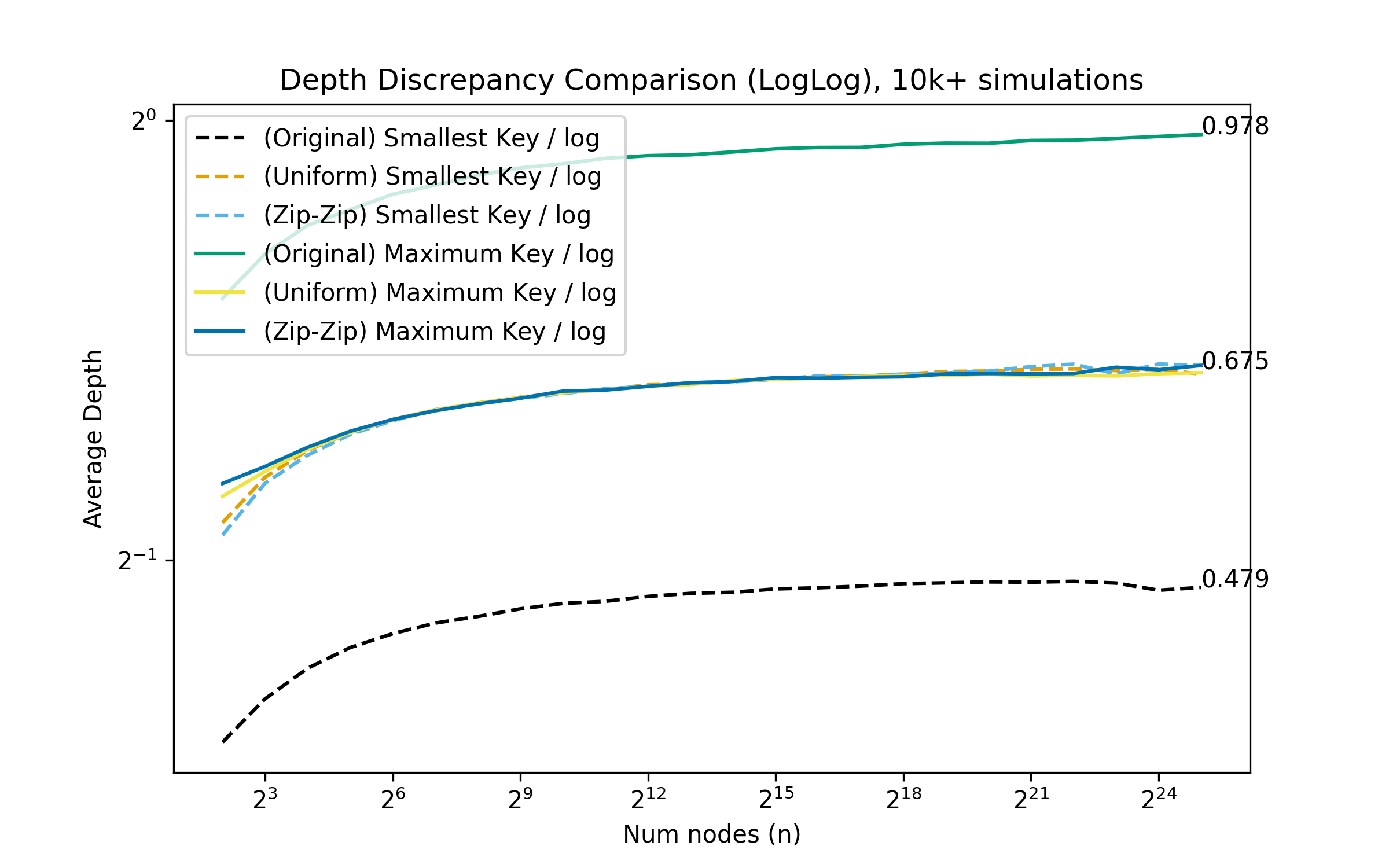}
\vspace*{-8pt}
    \caption{Experimental results for the depth discrepancy between the smallest and largest keys in the original, uniform (treap), and zip-zip variants of the zip tree. 
Each data point is scaled down by a factor of $\log n$ (base 2).}
    \label{fig:depth-discrepancy}
\end{figure}

\subsection{Depth Discrepancy}
First, we consider the respective depths of
the smallest and the largest keys in an original zip tree, compared
with the depths of these keys in a zip-zip tree.
See \Cref{fig:depth-discrepancy}.
The empirical results for the depths for 
smallest and largest keys in a zip tree clearly
match the theoretic expected values of 0.5 $\log n$ and $\log n$, respectively,
from \Cref{thm:zip-smallest-largest}. 
For comparison purposes, we
also plot the depths for smallest and largest keys 
in a uniform zip tree, which is essentially a treap,
and in a zip-zip tree (with static $r_2$-ranks).
Observe that, after the number of nodes, $n$, grows beyond
small tree sizes, there is no discernible difference between the
depths of the largest and smallest keys, and that this is very close
to the theoretical bound of $0.69\log n$.
\ifFull
Most notably, apart from some differences for very small trees, 
the depths for smallest and largest keys in a zip-zip
tree quickly conform to the uniform zip tree results,
while using exponentially fewer bits for each node's rank. 
\fi

\subsection{Average Key Depth and Tree Height}
Next, we empirically study the average
key depth and average height for the three aforementioned zip
tree variants. See \Cref{fig:avg-height}. 
Notably, we observe that
for all tree sizes, despite using exponentially fewer rank bits
per node, the zip-zip tree performs indistinguishably well from the
uniform zip tree, equally outperforming the original zip tree
variant.
The average key depths
and average tree heights for all variants appear to approach some
constant multiple of $\log n$. 
For example,
the average depth of a key in an original zip tree,
uniform zip tree, and zip-zip tree reached 
$1.373 \log n$,
$1.267 \log n$, 
$1.267 \log n$, 
respectively.
Interestingly, these values are roughly 8.5\% less than
the original zip tree and treap theoretical average key depths of
$1.5\log n$~\cite{zip} and $1.39\log n$~\cite{treaps}, respectively, 
suggesting that both variants approach their limits at a similar rate. 
Also, we note that our empirical 
average height bounds for uniform zip trees and zip-zip trees
get as high as $2.542\log n$. 


\begin{figure}[hbt]
    \centering
\vspace*{-2pt}
    \includegraphics[width=.85\textwidth,trim={0, 0, 0, 1cm}, clip]{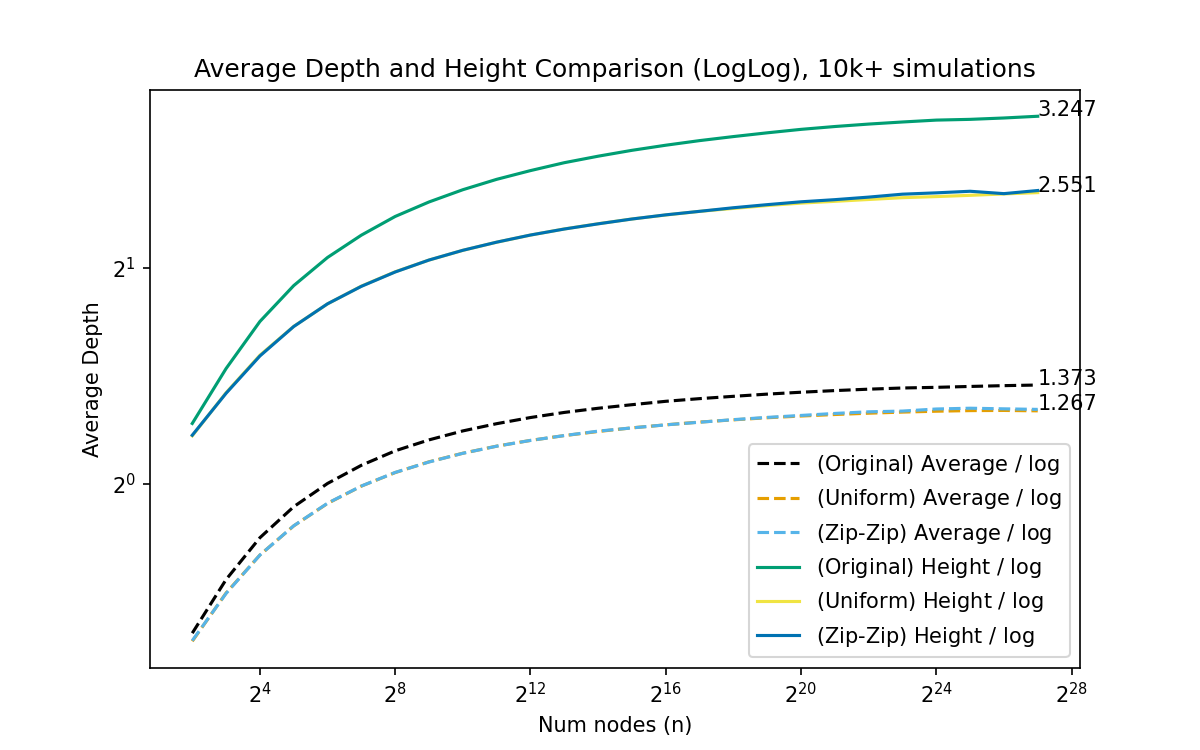}
\vspace*{-8pt}
    \caption{Experimental results for the average node depth and tree height, comparing the original, uniform (treap-like), and zip-zip variants of the zip tree. Each data point is scaled down by a factor of $\log n$ (base 2).}
    \label{fig:avg-height}
\end{figure}



\subsection{Rank Comparisons}
Next, we experimentally determine the frequency of complete rank
ties (collisions) for the uniform and zip-zip variants. 
%
See \Cref{fig:comparisons} (left).
The experiments show how the frequencies of rank collisions decrease polynomially in $n$ for the uniform zip tree and in $\log n$ for the second rank of the zip-zip variant. This reflects how these rank values were drawn uniformly from a range of $n^c$ and $\log^c n$, respectively. Specifically, we observe the decrease to be polynomial to $n^{-2.97}$ and $\log^{-2.99}{n}$, matching our chosen value of $c$ being 3.


\begin{figure}[b!]
    \centering
\hspace*{-16pt}
\begin{minipage}{1.1\textwidth}
    \includegraphics[width=.525\textwidth,trim={0, 0, 0, 1cm}, clip]{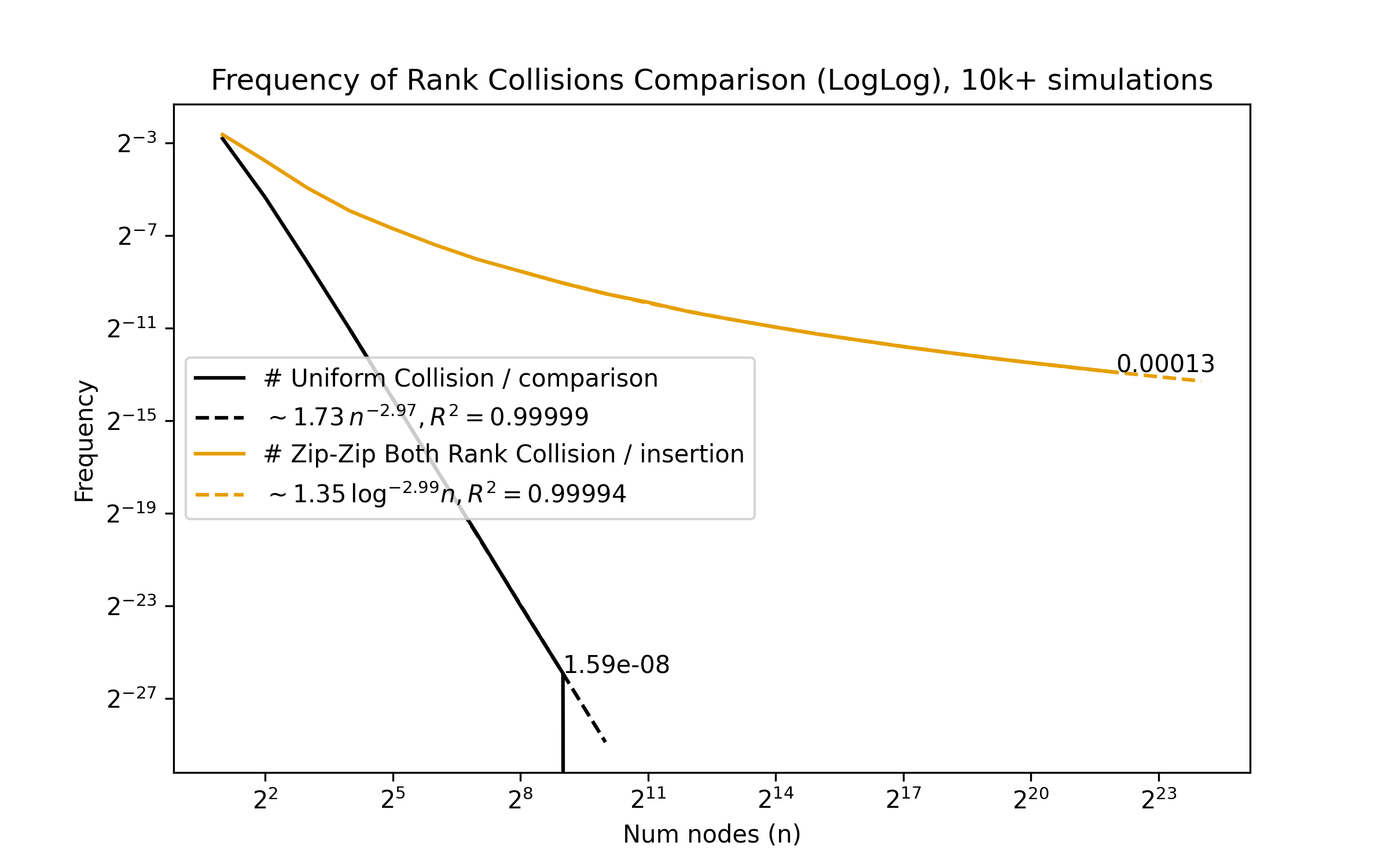}
\hspace*{-24pt}
    \includegraphics[width=.525\textwidth,trim={0, 0, 0, 1cm}, clip]{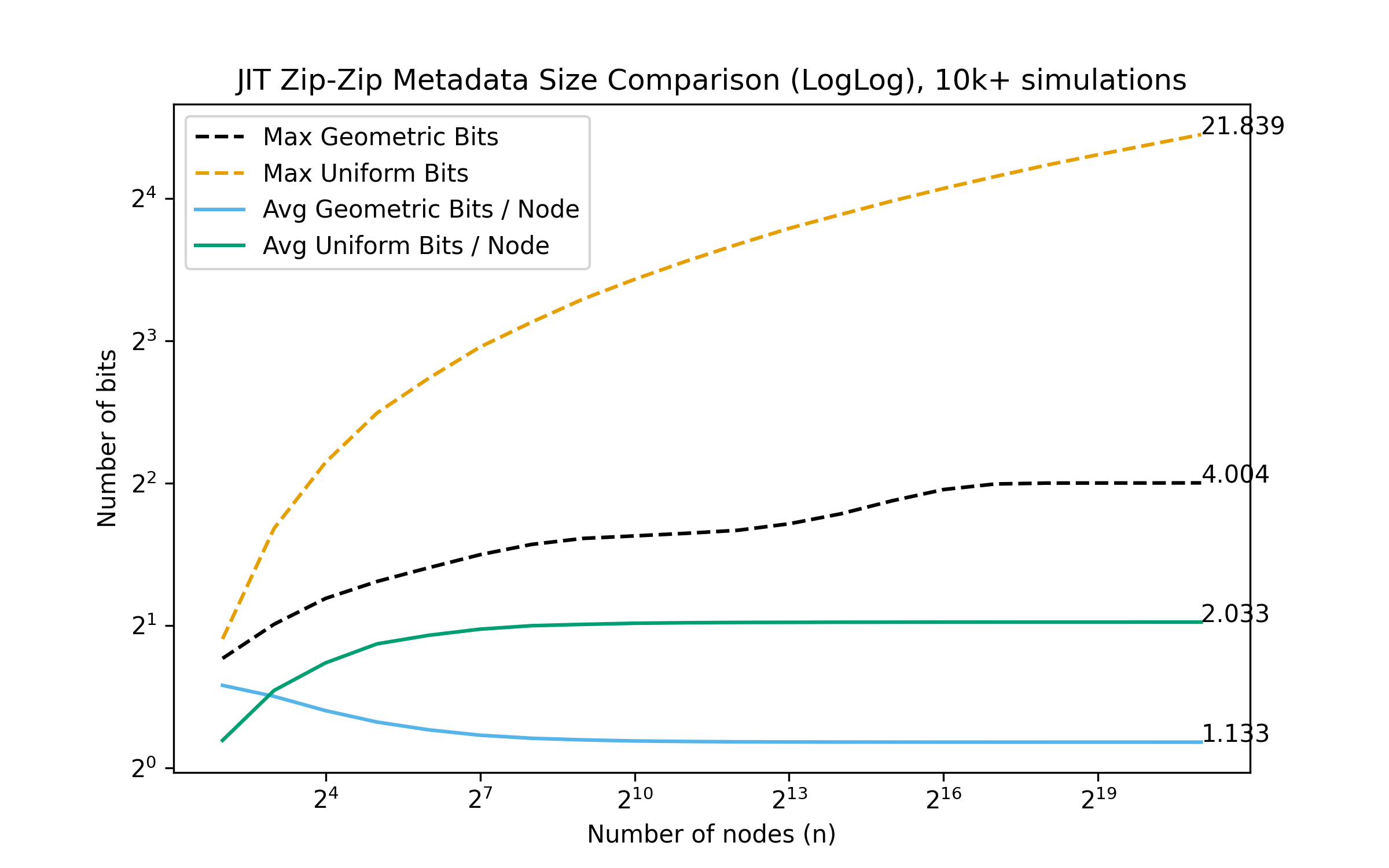}
\end{minipage}
    \caption{(Left) The frequency of encountered rank ties per rank comparison for the uniform variant and per element insertion for the zip-zip variant.
    (Right) The metadata size for the just-in-time implementation of the zip-zip tree.
}
    \label{fig:jit-metadata}
    \label{fig:comparisons}
\end{figure}


\subsection{Just-in-Time Zip-zip Trees}

In our final zip-zip tree experiment, we show how the just-in-time variant uses an
expected constant number of bits per node. 
See \Cref{fig:jit-metadata} (right).
We observe a 
results of only $1.133$ bits per node for storing the geometric
($r_1$) rank differences, and only $2.033$ bits per node for storing
the uniform ($r_2$) ranks, leading to a remarkable total of $3.166$
expected bits per node of rank metadata to achieve ideal treap
properties. Note that these results were obtained when nodes were inserted in increasing order of keys, and may not hold in general. For a uniformly at random insertion order, results were largely similar.

\subsection{Varying Geometric Mean}\label{subsec:variable-p}

In the original zip tree paper, the authors suggest that zip trees could be more balanced by increasing the mean of the geometric distribution by which a nodes' rank is chosen. The authors left this question open to experimental study, which we will now address.

We ran our experiments using zip trees with $2^{16}$ or around 65 thousand keys, varying the success probability of the geometric distribution from 0.00001 to 0.999, which in turn varies the mean from 10,000 to $1.\overline{001}$. Recall that the original zip tree reaches an average depth of 1.30 $\log{n}$ and height of 2.96 $\log{n}$ while using roughly $1 \log{\log{n}}$ bits of space and that the zip-zip tree reaches an average depth of 1.21 $\log{n}$ and height of 2.37 $\log{n}$ while using roughly $4 \log{\log{n}}$ bits of space. \Cref{fig:variable-p} confirms the results for the original zip trees, perfectly matching depth, height, and memory results when $p = 1/2$. Interestingly, when $p = 0.0002$ the depth, height, and memory results of this modified zip tree perfectly match results from the new zip-zip tree.

\begin{figure}[tb!]
    \centering
    \vspace*{-2pt}
    \hspace*{-16pt}
    \begin{minipage}{1.1\textwidth}
        \includegraphics[width=.525\textwidth,trim={0, 0, 0, 1cm}, clip]{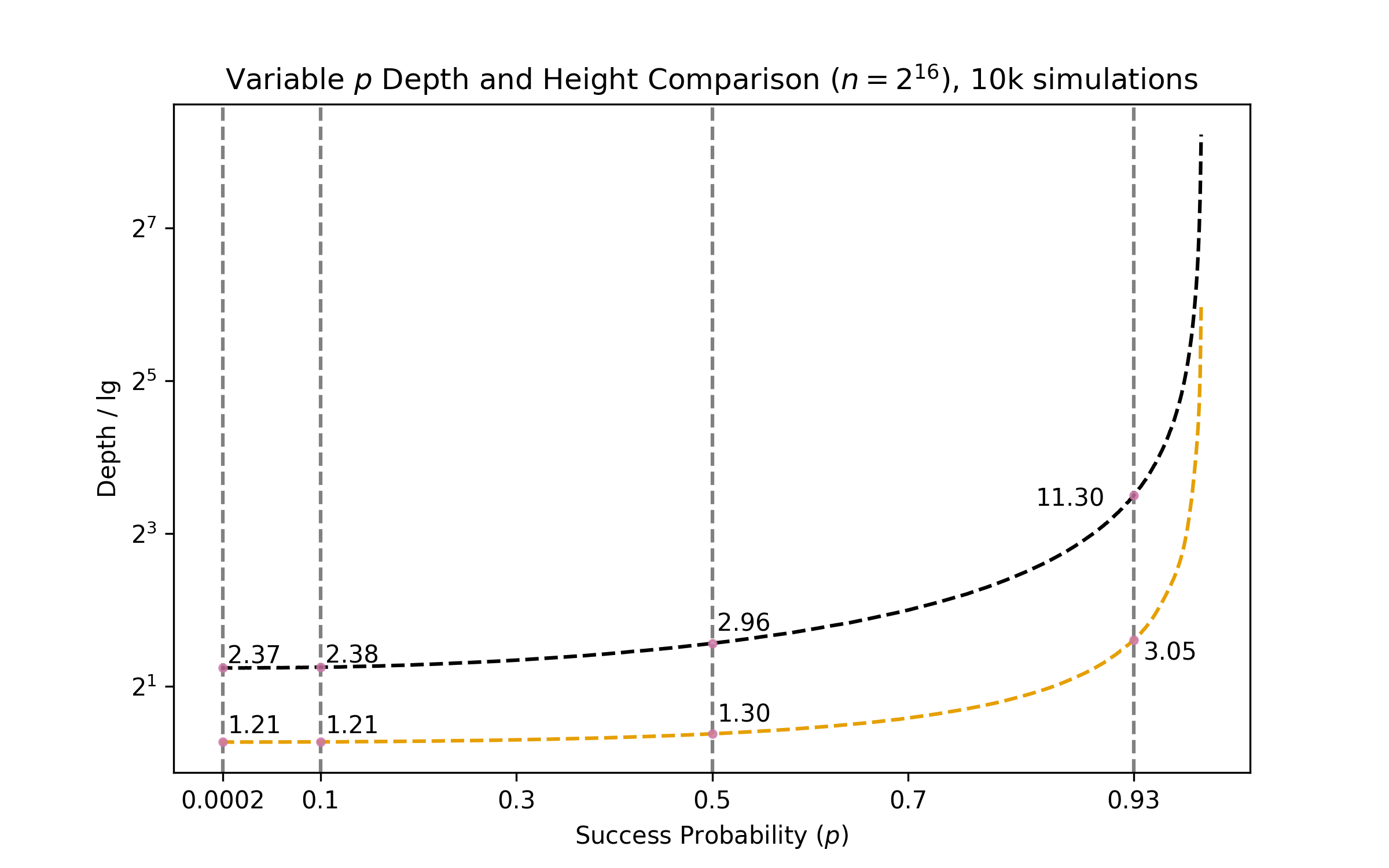}
    \hspace*{-24pt}
        \includegraphics[width=.525\textwidth,trim={0, 0, 0, 1cm}, clip]{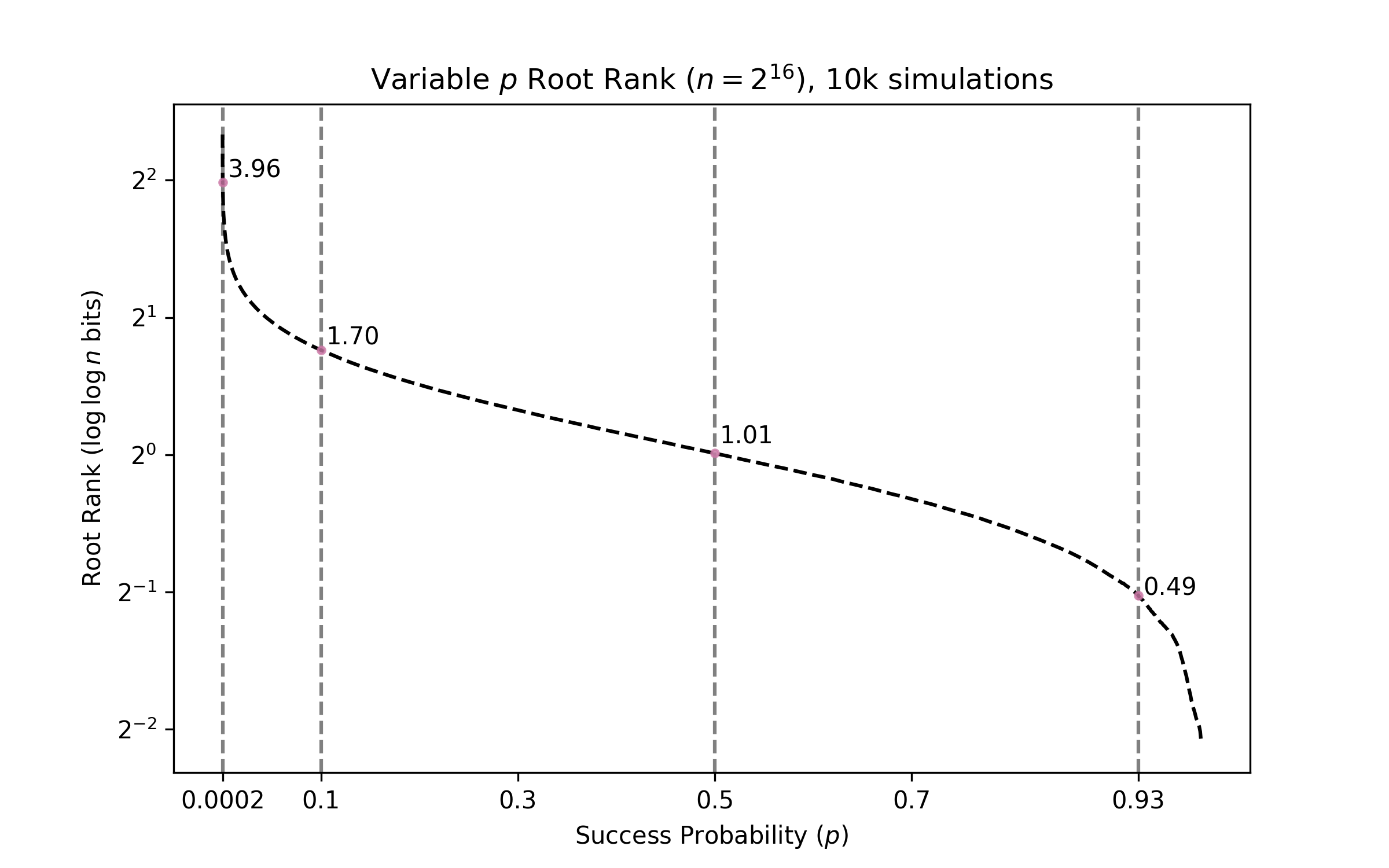}
    \end{minipage}
        \caption{Experimental results for the original zip tree when varying the geometric success probability ($p$) for the rank distribution. These show the trade-off between the number of bits required (Right) versus the performance gained (Left). Like before, the depths and heights are scaled down by a factor of $\log{n}$, while the root ranks this time are scaled by a factor of $\log{\log{n}}$ (all base 2).
    }
    \label{fig:variable-p}
\end{figure}

\section{Biased Zip-zip Trees}
In this section, we describe how to make zip-zip trees biased  
for weighted keys.
In this case, we assume each key, $k$, has an associated weight, $w_k$, such
as an access frequency.
Without loss of generality, we assume that weights don't change, since we can
simulate a weight change by deleting and reinserting a key with its new
weight.

Our method for
modifying zip-zip trees to accommodate weighted keys is simple---when we
insert a key, $k$, with weight, $w_k$, we now 
assign $k$ a rank pair, $r=(r_1,r_2)$,
such that $r_1$ is $\lfloor\log w_k\rfloor + X_k$, where $X_k$ is drawn
independently
from a geometric distribution with success probability $1/2$, and $r_2$
is an integer independently chosen uniformly in the range from $1$ to
$\lceil\log^c n\rceil$, where $c\ge 3$.
Thus, the only modification to our zip-zip tree construction to define
a biased zip-zip tree is that 
the $r_1$ component is now a sum of a logarithmic rank and a value drawn from
a geometric distribution. 
As with our zip-zip tree definition for unweighted keys,
all the update and search operations for biased zip-zip trees are the same
as for the original zip trees, except for this modification to the rank, $r$,
for each key (and performing rank comparisons lexicographically).
Therefore, assuming polynomial weights, we still
can represent each such rank, $r$, using $O(\log\log n)$ bits w.h.p.

We also have the following theorem, which implies the 
expected search performance bounds for weighted keys.

\begin{theorem}
The expected depth of a key, $k$, with weight, $w_k$, in a biased zip-zip
tree storing a set, $\mathcal{K}$, of $n$ keys is $O(\log (W/w_k))$,
where $W=\sum_{k\in\mathcal{K}} w_k$.
\end{theorem}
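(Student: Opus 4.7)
I plan to mirror the proof of \Cref{thm:harmonic}, but in a weight-adjusted way. First, I would condition on a ``nice'' event $C$ analogous to the one used there: the $r_1$-rank of the root is $O(\log n + \log W)$, and no lexicographic tie among $(r_1, r_2)$-rank pairs occurs for any ancestor of $k=k_j$. Since $r_2$ is drawn uniformly from $[1,\log^c n]$ with $c\ge3$, and the expected number of ancestors of $k_j$ is $O(\log n)$ by Chernoff-style bounds on geometric sums, the complementary bad event contributes only $o(1)$ to the expected depth (exactly as in the proof of \Cref{thm:harmonic}). Under $C$, $k_i$ is an ancestor of $k_j$ iff $r(k_i)$ is the unique lex-maximum rank over the keys in the interval $[\min(i,j),\max(i,j)]$ of the sorted key list. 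Hence
\begin{equation*}
E[\mathrm{depth}(k_j)\mid C] \;=\; 1 + \sum_{i<j} p_i^{L} + \sum_{i>j} p_i^{R},
\end{equation*}
where $p_i^{L/R}$ is the probability that $k_i$ attains the unique lex-max rank in its interval with $k_j$.

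The crux of the proof is to establish, for a universal constant $C_0$, the bound $p_i^R \le C_0 \, w_i / W_{[j,i]}$, where $W_{[j,i]}=\sum_{\ell=j}^{i} w_\ell$, and symmetrically for $p_i^L$. The intuition is that $r_1(k_\ell)=\lfloor \log w_\ell\rfloor + X_\ell$ with $X_\ell$ geometric of success probability $1/2$, so $\Pr(r_1(k_\ell)\ge m)\le 2 w_\ell/2^m$ for all $m\ge 0$. Thus each key's effective ``level weight'' is proportional to $w_\ell$, and the probability that $k_i$ has the maximum level in the interval is, up to constants, $w_i$ divided by the total weight $W_{[j,i]}$ in that interval, exactly as in the biased skip list analysis of Bagchi, Buchsbaum, and Goodrich. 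I would verify this by decomposing $p_i^R$ by the value of $r_1(k_i)$, applying the tail bound above to each of the other keys in the interval, and summing.

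Given the per-ancestor bound, a standard telescoping estimate finishes the proof: writing $W(m)=\sum_{\ell=j}^{j+m} w_\ell$, we have
\begin{equation*}
\sum_{i>j} \frac{w_i}{W_{[j,i]}} \;=\; \sum_{m\ge1}\frac{W(m)-W(m-1)}{W(m)} \;\le\; \sum_{m\ge1}\ln\!\frac{W(m)}{W(m-1)} \;=\; \ln\!\frac{W_{\mathrm{right}}}{w_j},
\end{equation*}
and analogously the left-side sum is $\ln(W_{\mathrm{left}}/w_j)$. Adding the two sides yields $E[\mathrm{depth}(k_j)] \le C_0\bigl(\ln(W_{\mathrm{left}}/w_j)+\ln(W_{\mathrm{right}}/w_j)\bigr)+O(1) = O(\log(W/w_j))$.

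The main obstacle is the per-ancestor probability estimate $p_i^R = O(w_i/W_{[j,i]})$. One has to handle the floor in $\lfloor\log w_\ell\rfloor$, the geometric tails, the requirement of a \emph{strict} lex-max (rather than a weak max), and the effect of the secondary ranks $r_2$ on tie-breaking. Once that bound is in place, however, the telescoping step and the reduction to the $C$-conditioned expectation are both routine adaptations of the unweighted analysis.
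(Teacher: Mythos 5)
Your proposal is correct in outline, but it takes a genuinely different route from the paper. The paper's proof is a two-line reduction: a biased zip-zip tree is dual (via the zip-tree/skip-list isomorphism) to a biased skip list of Bagchi, Buchsbaum, and Goodrich~\cite{bagchi2005biased} built with the same $r_1$-ranks, and the $O(\log(W/w_k))$ expected depth bound is then imported directly from their analysis, with \Cref{thm:skip-max} invoked to control the high levels and linearity of expectation to pass from the skip list to the tree. You instead give a self-contained, first-principles argument that mirrors the depth analysis of \Cref{thm:harmonic}: condition on the nice event $C$, write the depth as a sum of ancestor indicators, prove the weighted per-ancestor bound $\Pr(X_i=1)\le C_0\, w_i/W_{[j,i]}$, and telescope. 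The trade-off is clear: the paper's reduction is much shorter and leans on prior work, while your direct approach is self-contained and makes visible exactly where the biasing of the geometric shift $\lfloor\log w_k\rfloor$ enters. Essentially, you are re-deriving the core lemma of the biased skip list analysis inside the zip-zip framework rather than citing it.

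The one real gap to be aware of is that the per-ancestor estimate $p_i^R\le C_0\,w_i/W_{[j,i]}$ is stated as the crux but only sketched. It does hold, and your outline of the argument is the right one: writing $p_i^R\le\sum_m \Pr(r_1(k_i)=m)\prod_{\ell}\Pr(r_1(k_\ell)<m)$, using $\Pr(r_1(k_\ell)=m)\le w_\ell/2^{m+1}$ and $\prod_\ell\Pr(r_1(k_\ell)<m)\le\exp\bigl(-c\,(W_{[j,i]}-w_i)/2^{m}\bigr)$, and summing the resulting geometric series gives $O\!\left(w_i/(W_{[j,i]}-w_i)\right)$, which one then splits into the cases $w_i\le W_{[j,i]}/2$ and $w_i>W_{[j,i]}/2$ to recover $O(w_i/W_{[j,i]})$. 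You also correctly flag that the floor, the strict-vs.-weak maximum, and the $r_2$-tie-breaking each need a sentence, but none of these breaks the argument. If you carry out that computation, your proof stands as a valid and more explicit alternative to the paper's reduction.
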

\begin{proof}
By construction, a biased zip-zip tree, $T$, is dual to a 
biased skip list, $L$, 
defined on $\mathcal{K}$
with the same $r_1$ ranks as for the keys in $\mathcal{K}$ as assigned
during their insertions into $T$.
Bagchi, Buchsbaum, and Goodrich~\cite{bagchi2005biased} show that the
expected depth of a key, $k$, in $L$ is $O(\log (W/w_k))$.
Therefore, by \Cref{thm:skip-max},
and the linearity of expectation,
the expected depth of $k$ in $T$ is $O(\log (W/w_k))$, where, as mentioned
above, $W$ is the sum of the weights of the keys in $T$ and $w_k$
is the weight of the key, $k$.
\qed
\end{proof}

Thus, a biased zip-zip tree has similar expected search and update performance
as a biased skip list, but with reduced space, since a biased zip-zip tree
has exactly $n$ nodes, whereas, assuming a standard skip-list representation
where we use a linked-list node for each instance of a key, $k$, on a level
in the skip list (from level-0 to the highest level where $k$ appears)
a biased skip list has an expected number
of nodes equal to $2n+2\sum_{k\in\mathcal{K}} \log w_k$.
For example, if there are $n^\epsilon$ keys with weight $n^\epsilon$, then
such a biased skip list would require $\Omega(n\log n)$ nodes, whereas
a dual biased zip-zip tree would have just $n$ nodes.

Further, due to their simplicity and weight biasing,
we can utilize biased zip-zip trees as 
the biased auxiliary data structures in the link-cut dynamic tree
data structure of Sleator and Tarjan~\cite{link-cut}, thereby providing
a simple implementation of link-cut trees.

\section{Future Work}

In our paper, there is a clear trade-off between memory and history independence.
In order to achieve an expected constant amount of metadata bits per node, history independence must be sacrificed.
It remains interesting to see whether a version of the zip tree that is able to optimize for both while still maintaining good average node depth and height exists.
In \Cref{subsec:variable-p} we ran experiments on a version of the zip tree where the geometric mean was increased and saw that it was able to reproduce the results of the zip-zip tree. While such a variant would not be able to run using only an expected constant number of bits per node in the same way as the JIT variant, it nevertheless remains interesting whether something can be proved about the average node depth and height of such a tree.
Particularly whether it is possible to similarly achieve good asymptotic bounds if the geometric mean is some function of the final size of the tree.
As stated in the original paper, the zip tree and its variants present themselves well to concurrent implementations, and there remains no known non-blocking implementation.

\section{Declarations}

\subsection{Conflict of Interest}
The authors declare that there were no conflicts of interest during the writing and publication of these results.

\clearpage
    \bibliographystyle{splncs04}
    \bibliography{refs}

\clearpage
\begin{appendix}

\section{Pseudo-code for Insertion and Deletion in Zip Trees and Zip-zip Trees}\label{sec:pseudo}
For completeness, we give the pseudo-code for the insert and delete operations,
from Tarjan, Levy, and Timmel~\cite{zip}, in \Cref{fig:insert,fig:delete}.

\begin{figure}[hbtp]
    \centering

    \begin{adjustwidth}{-0.5cm}{-0.5cm}
    \begin{algorithmic}[]
        \Function{Insert}{$x$}
            \State $rank \gets x.rank \gets$ \Call{RandomRank}{}
            \State $key \gets x.key$
            \State $cur \gets root$
            \While{$cur \neq \text{null}$ and $(rank < cur.rank$ or $(rank = cur.rank$ and $key > cur.key))$}
                \State $prev \gets cur$
                \State $cur \gets \IF key < cur.key \THEN cur.left \ELSE cur.right$
            \EndWhile
            \vspace{0.5\baselineskip}
            \State $\IF cur = root \THEN root \gets x$
            \State $\ELIF key < prev.key \THEN prev.left \gets x$
            \State $\textbf{else}~ prev.right \gets x$
            \Statex
            \State $\IF cur = \text{null} \THEN \{~x.left \gets x.right \gets \text{null}; ~\Return~ \}$
            \State $\IF key < cur.key \THEN x.right \gets cur \ELSE x.left \gets cur$
            \State $prev \gets x$
            \Statex
            \While{$cur \ne \text{null}$}
                \State $fix \gets prev$
                \Statex
                \If{$cur.key < key$}
                    \Repeat{$~\{~prev \gets cur;~cur \gets cur.right~\}$}
                    \Until{$cur = \text{null}$ or $cur.key > key$}
                \Else
                    \Repeat{$~\{~prev \gets cur;~cur \gets cur.left~\}$}
                    \Until{$cur = \text{null}$ or $cur.key < key$}
                \EndIf
                \vspace{0.5\baselineskip}
                \If{$fix.key > key$ or $(fix = x$ and $prev.key > key)$}
                    \State $fix.left \gets cur$
                \Else
                    \State $fix.right \gets cur$
                \EndIf
            \EndWhile
        \EndFunction
    \end{algorithmic}
    \end{adjustwidth}

    \caption{\label{fig:insert} Insertion in a zip tree (or zip-zip tree), 
from~\cite{zip}.}
\end{figure}

\begin{figure}[hbtp]
    \centering
    \begin{adjustwidth}{-0.5cm}{-0.5cm}
        \begin{algorithmic}[]
            \Function{Delete}{$x$}
                \State $key \gets x.key$
                \State $cur \gets root$
                \While{$key \ne cur.key$}
                    \State $prev \gets cur$
                    \State $cur \gets \IF key < cur.key \THEN cur.left \ELSE cur.right$
                \EndWhile
                \vspace{0.5\baselineskip}
                \State $left \gets cur.left;~right \gets cur.right$
                \Statex
                \State $\IF left = \text{null} \THEN cur \gets right$
                \State $\ELIF right = \text{null} \THEN cur \gets left$
                \State $\ELIF left.rank \ge right.rank \THEN cur \gets left$
                \State $\textbf{else}~ cur \gets right$
                \Statex
                \State $\IF root = x \THEN root \gets cur$
                \State $\ELIF key < prev.key \THEN prev.left \gets cur$
                \State $\textbf{else}~ prev.right \gets cur$
                \Statex
                \While{$left \ne \text{null}$ and $right \ne \text{null}$}
                    \If{$left.rank \ge right.rank$}
                        \Repeat{$~\{~prev \gets left;~left \gets left.right~\}$}
                        \Until{$left = \text{null}$ or $left.rank < right.rank$}
                        \State $prev.right \gets right$
                    \Else
                        \Repeat{$~\{~prev \gets right;~right \gets right.left~\}$}
                        \Until{$right = \text{null}$ or $left.rank \ge right.rank$}
                        \State $prev.left \gets left$
                    \EndIf
                \EndWhile
            \EndFunction
        \end{algorithmic}
    \end{adjustwidth}

    \caption{\label{fig:delete} Deletion in a zip tree (or zip-zip tree), 
    from~\cite{zip}.}
\end{figure}

\end{appendix}

\end{document}